\newcommand\scale[2]{\vstretch{#1}{\hstretch{#1}{#2}}}
\pgfplotsset{compat=1.17} 
\newcommand\multiline[1]{\parbox[t]{\dimexpr\linewidth-\ALG@thistlm}{#1}}
\newtheorem{defi}{\textbf{Definition}}
\newtheorem{rek}{\textbf{Remark}}
\newtheorem{lema}{\textbf{Lemma}}
\newtheorem{cor}{\textbf{Corollary}}
\newcounter{exampcount}
\newcommand{\algoref}[1]{Algorithm~\ref{#1}}
\newcommand{\lemaref}[1]{Lemma~\ref{#1}}
\newcommand{\obs}{\mathit{obs}}
\newcommand{\Loc}{\mathit{Loc}}
\newcommand{\Per}{\mathit{Per}}
\newcommand{\loc}{\mathit{loc}}
\newcommand{\per}{\mathit{per}}
\newcommand{\csg}{\mathsf{C}}
\newcommand{\agent}{\mathsf{Ag}}
\newcommand{\fpaths}{\mathit{FPaths}}
\newcommand{\startpara}[1]{{%
\vskip4pt\noindent
{\bf #1.}}}
\renewcommand{\ALG@name}{\sc Algorithm}
\title[Short Title]{HSVI-based Online Minimax Strategies for Partially Observable Stochastic Games with Neural Perception Mechanisms}
\begin{document}

\maketitle

\begin{abstract}%
 We consider a variant of continuous-state partially-observable stochastic games
 with neural perception mechanisms and an
 asymmetric information structure. One agent has partial information, with the observation function implemented as a neural network, while the other agent is assumed to have full knowledge of the state. We present, for the first time, an efficient online method to compute an $\varepsilon$-minimax strategy profile, which requires only one linear program to be solved for each agent at every stage, instead of a complex estimation of opponent counterfactual values.  
 For the partially-informed agent, we propose a continual resolving approach which uses lower bounds, pre-computed offline with heuristic search value iteration (HSVI), instead of opponent counterfactual values. This inherits the soundness of continual resolving at the cost of pre-computing the bound. For the fully-informed agent, we propose an inferred-belief strategy, where the agent maintains 
an inferred belief about the belief of the partially-informed agent based on (offline) upper bounds from HSVI, guaranteeing $\varepsilon$-distance to the value of the game at the initial belief known to both agents. 
\end{abstract}

\begin{keywords}%
  Minimax strategies, continual resolving, partially observable stochastic games.
\end{keywords}

\section{Introduction}
\noindent
Partially-observable stochastic games (POSGs) are a modelling formalism that enables strategic reasoning and (near-)optimal synthesis of strategies and equilibria in multi-agent settings with partial observations and uncertainty. 
\emph{One-sided POSGs}~\citep{KH-BB-VK-CK:23} are a tractable subclass of two-agent, zero-sum POSGs with an asymmetric information structure, where only one agent has partial information while the other agent is assumed to have full knowledge. This is well suited to autonomous safety- or security-critical settings, such as patrolling or pursuit-evasion games, which require reasoning about worst-case assumptions.
Since real-world
settings increasingly often utilise neural networks (NNs) for perception tasks such as localisation and object detection, \emph{one-sided neuro-symbolic POSGs (one-sided NS-POSGs)} were introduced \citep{RY-GS-GN-DP-MK:23-2}. In this model the agent with partial information observes the environment only through a trained NN classifier, and consequently the game is generalised to \emph{continuous environments}, to align with NN semantics, while  observations remain discrete. A point-based NS-HSVI method was developed to approximate values of one-sided NS-POSGs working with (polyhedral decompositions of) the continuous space.

Strategy synthesis for continuous games is more challenging than for the finite-state case~\citep{KH-BB-VK-CK:23}, since continuous-state spaces lead to an infinite number of strategies and discretisation suffers from  
the curse of dimensionality. 
Several \emph{offline} methods exist, based on counterfactual regret minimisation, heuristics or reinforcement learning (see Related Work, below).
In this paper, we consider \emph{online} methods, which can improve efficiency and adaptability.
The best performing online method~\citep{MM-MS-NB-VL-DM-et-al:17} continually resolves 
a local strategy 
that only keeps track of the agent's belief of its opponent state and a vector of 
opponent counterfactual values. 
Apart from \cite{KH-BB-VK-CK:23}, existing continual resolving approaches \citep{MM-MS-NB-VL-DM-et-al:17,MV-VK-VL:19,MS-MM-NB-RK-JD-others:23} are for extensive form games (EFGs), and cannot be directly applied to POSGs. This is because, although the two formalisms are connected \citep{VK-MS-NB-MB-VL:22},  transitioning between them is not straightforward. 

\startpara{Contributions}
We develop a \emph{continual resolving} approach for one-sided NS-POSGs,
addressing several challenges. Firstly, existing continual resolving approaches need to estimate the opponent's counterfactual values by solving a 
subgame at each stage \citep{MM-MS-NB-VL-DM-et-al:17}, which would be intractable for continuous games. 
Instead, for the agent with partial observation ($\agent_1$), we use the lower bound computed offline by NS-HSVI~\citep{RY-GS-GN-DP-MK:23-2}, giving a polyhedral bound
without solving a subgame. At each stage, we solve a linear program (LP), whose size is linear in the number of states in the current belief rather than the number of states reached, to compute the agent's action choice and update the lower bound. Thus, a \emph{stage strategy} is computed online, in the spirit of continual resolving,  for each  situation as it arises during execution, instead of storing a complete strategy. Although we require offline computation for NS-HSVI, existing continual resolving approaches need to train deep counterfactual value networks to solve the subgame. 
Importantly, our NS-HSVI continual resolving does not lose soundness, i.e., $\varepsilon$-exploitability \citep{NB-MJ-MB:14}.

We can use any synthesis method for fully-observable stochastic games 
to generate an $\varepsilon$-minimax strategy for the fully-informed agent ($\agent_2$)~\citep{arxiv}. However, 
solving the fully-observable case would generate a \emph{complete} strategy, which can be costly in terms of memory. 
We instead propose an online \emph{inferred-belief} strategy by observing that, by using the offline upper bound from NS-HSVI,
$\agent_2$ only needs to keep track of an inferred belief about $\agent_1\!$'s belief
and solve an LP,  linear in the number of states in the current belief, to synthesise an action and the next inferred-belief of $\agent_2$. 
Since $\agent_2$ is fully informed, it does not need to store its belief. 
This allows us to generate a simpler strategy than the complete strategy for $\agent_2$, which guarantees the value at the initial belief, known to both agents, but cannot optimally employ the suboptimal actions of $\agent_1$ during play.

Summarising the contribution, we present, for the first time, an efficient online method to compute an $\varepsilon$-minimax strategy profile for one-sided NS-POSGs, a variant of two-player continuous-state POSGs with neural perception mechanisms, 
by exploiting bounds pre-computed offline by a variant of HSVI.
We implement our approach, evaluate it on a pursuit-evasion model
inspired by mobile robotics and investigate the synthesised agent strategies. 

\startpara{Related Work} 
Existing offline
strategy synthesis methods for \emph{one-sided} POSGs include a space partition approach~\citep{WZ-TJ-HL:22},
a point-based approximate algorithm for continuous observations~\citep{WZ-TJ-HL:23},
projection to POMDPs based on factored representations~\citep{SC-NJ-SB-MS-UT:21} and HSVI algorithms for finite \citep{KH-BB-VK-CK:23} and continuous-state spaces \citep{RY-GS-GN-DP-MK:23-2}. Since POSGs and EFGs are connected through factored-observation stochastic games \citep{VK-MS-NB-MB-VL:22}, we next review relevant methods for two-agent zero-sum EFGs.

% CFR
Counterfactual Regret Minimization (CFR) \citep{MZ-MJ-MB-CP:07} 
exploits the fact that the time-averaged strategy profile of regret minimizing algorithms converges to an 
$\varepsilon$-minimax strategy profile, in two-agent zero-sum EFGs with imperfect information. Since its introduction, a variety of CFR variants have been proposed and successfully applied to games~\citep{ML-KW-MZ-MB:09,VL-ML-MB:15,NB-MJ-MB:14}.   
However, since the CFR-based approaches require iterative traversal of the game tree, they become intractable when the tree is large. Additionally, these approaches are \emph{offline} algorithms, returning a complete solution strategy that is difficult to represent and store. 

A number of algorithms based on game-theoretic learning models such as reinforcement learning and heuristic search have also been proposed, which are able to compute strategies for two-agent zero-sum games with imperfect information, 
including~\citep{BB-CK-VL-MP:14,JH-ML-DS:15,ML-VZ-AG-AL-KT-others:17,SM-JBL-KAW-PB-RF:21}  
and heuristic search value iteration (HSVI) \citep{RY-GS-GN-DP-MK:23-2,AD-OB-JSD-AS:23}, which we utilise in our work. However, these approaches are also offline algorithms and unable to refine strategies at test time.

The most related approach is
\emph{continual resolving} used in \cite{KH-BB-VK-CK:23}, which is based on DeepStack \citep{MM-MS-NB-VL-DM-et-al:17}, %for poker, 
although 
other variants have also been proposed, e.g.,~\citep{MV-VK-VL:19,MS-MM-NB-RK-JD-others:23,NB-AB-AL-QG:20}.
Both \cite{KH-BB-VK-CK:23} and our work
are variants of continual resolving for one-sided POSGs, except we consider continuous-state spaces. Under the belief update in \cite{KH-BB-VK-CK:23},
the current state could be missed and the belief might 
be empty. 
We assume a uniform stage strategy  
to fix this issue and ensure that the true state is always in 
the current belief. The LP size in \cite{KH-BB-VK-CK:23} is fixed, 
while the LP in our case varies at each stage because of
no prior enumeration of all reachable states. 
\cite{KH-BB-VK-CK:23} performs the belief update before $\agent_1$ taking action and observing,  
whereas
we update the belief using the action and the next observation,
which results in a more accurate belief.

\iffalse
Both \cite{KH-BB-VK-CK:23} and the work presented in this paper are variants of continual resolving for one-sided POSGs. The difference arises in the fact that \cite{KH-BB-VK-CK:23} considers finite-state spaces while in our case we consider continuous-state spaces, as well as the belief update. $1)$ Since \cite{KH-BB-VK-CK:23} uses an assumed minimax stage strategy for the fully informed agent based on the lower bound to update $\agent_1$'s belief, the true state could be missed and the belief thus might be stuck in being empty. We assume a uniform stage strategy for the fully informed agent to fix this issue and ensure that the true state is always in the support of the current belief. $2)$ The size of LP in \cite{KH-BB-VK-CK:23} is fixed as the number of states is fixed, while the LP in our case varies at each stage because there is no prior enumeration of all reachable states, which could be large. $3)$ The belief update in \cite{KH-BB-VK-CK:23} is performed before taking its action and observing the next agent state, wasting the useful information. However, we update the belief after taking its action and observing the next agent state, thus resulting in a more accurate belief. This action-observation information can also help to cut down states that could not be reached and thus reduce the number of points in the belief and further the size of LP.
\fi
\section{Background} 

\noindent
We briefly review the model of \cite{RY-GS-GN-DP-MK:23-2}, 
which generalises \emph{one-sided POSGs}~\citep{KH-BB-VK-CK:23} to continuous-state spaces and allows neural perception mechanisms.
Let $\mathbb{P}(X)$ and $\mathbb{F}(X)$ denote the spaces of probability measures and functions on a Borel space $X$, respectively.

\startpara{One-sided NS-POSGs} 
A \emph{one-sided neuro-symbolic POSG (NS-POSG)} $\csg$ 
is a two-player zero-sum infinite-horizon game with discrete actions and observations, where 
one player ($\agent_1$) is partially informed and the other ($\agent_2$) is fully informed. 
Unlike~\cite{KH-BB-VK-CK:23}, the game is played in 
a closed continuous environment $S_E$, which $\agent_1$ perceives only using 
perception function $\obs_1$ given as
a (trained) ReLU NN classifier that maps environment states to so called \emph{percepts}, ranging over a finite set $\Per_1$. The use of classifiers is aligned with, e.g., object detection or vision tasks in autonomous systems. 
We further assume that $\agent_1$ has a discrete local state space $\Loc_1$, which is observable to both agents, and that $\agent_2$ has full knowledge of the environment's state.

A game $\csg$ comprises
agents $\agent_1  {=} $ $(S_1,A_1,\obs_1,\delta_1)$, $\agent_2 {=} (A_2)$ and environment $E {=} (S_E,\delta_E)$, where $S_1 = \Loc_1 {\times} \Per_1$; 
$A = A_1 {\times} A_2$ are joint actions; $\obs_1 : (\Loc_1 {\times} S_E) \to \Per_1$ is $\agent_1\!$'s perception function (note that we allow NNs to additionally depend on local states); $\delta_1: (S_1 {\times} A) \to \mathbb{P}(\Loc_1)$ is $\agent_1\!$'s 
local transition function; and $\delta_E: (\Loc_1 {\times} S_E {\times} A) \to \mathbb{P}(S_E)$ is $E$'s finite-branching 
transition function.
We work in the \emph{belief space} $S_B \subseteq \mathbb{P}(S)$, where $S = S_1 {\times} S_E$, and assume an initial belief $b^{\mathit{init}}$
using the particle-based representation \citep{JMP-NV-MTS-PP:06,AD-NDF-NJG:01}. A belief of $\agent_1$ is given by $ b = (s_1, b_1)$, where $s_1 \in S_1$, $b_1 \in \mathbb{P}(S_E)$
and $b_1$ is represented by a weighted particle set $\{ (s_E^i, \kappa_i) \}_{i=1}^{N_b}$ where $\kappa_i \ge 0$ and $\mbox{$\sum\nolimits_{i=1}^{N_b}$} \kappa_i = 1$.  

The game starts in a state $s=(s_1,s_E)$, where $s_1=(\loc_1,  \per_1) \in S_1$, 
and $s$ is sampled from $b^{\mathit{init}}$. 
At each \emph{stage} of the game, 
both agents concurrently choose one of their actions.
If $a=(a_1,a_2)\in A$ is chosen,
the local state $\loc_1$ of $\agent_1$ 
is updated to $\loc_1'\in \Loc_1$ via $\delta_1(s_1,a)$, while the environment 
updates its state to $s_E'\in S_E$ via $\delta_E(\loc_1, s_E,a)$. Finally, $\agent_1$, based on $\loc_1'$,
generates its percept $\per_1' = \obs_1(\loc'_1,s'_E)$ at $s_E'$ and $\csg$ reaches the state $s'=((\loc_1', \per_1'), s_E')$. The probability of transitioning from $s$ to $s'$ under $a$ is $\delta(s, a)(s') = \delta_1(s_1,a)(\loc_1') \delta_E(\loc_1, s_E,a)(s_E')$.

\startpara{Strategies}
We distinguish between a \emph{history} $\pi$ (a sequence of states and joint actions,
where $\pi(k)$ is the $(k{+}1)$th state,
and $\pi[k]$ is the $(k{+}1)$th action) and 
a (local) \emph{action-observation history (AOH)} for $\agent_i$ (a sequence of its observations and actions).
For the fully-informed $\agent_2$, an AOH is a history. 
A \emph{strategy} of $\agent_i$ is a mapping 
$\sigma_i: \fpaths_{\csg,i} \to \mathbb{P}(A_i)$, where $\fpaths_{\csg,i}$ is the set of $\agent_i$'s finite AOHs. 
A \emph{(strategy) profile}  $\sigma=(\sigma_1, \sigma_2)$ is a pair of strategies 
and we denote by $\Sigma_i$ and  $\Sigma$ the sets of strategies of $\agent_i$ and profiles.
The choices for the players after a history $\pi$ are given by \emph{stage strategies}:
for $\agent_1$ this is a distribution $u_1 \in \mathbb{P}(A_1)$ and for $\agent_2$  a function $u_2: S \to \mathbb{P}(A_2)$, i.e., 
$u_2 \in \mathbb{P}(A_2 \mid S)$. 
Given a belief $(s_1, b_1)$, if $\agent_1$ chooses $a_1$, \emph{assumes} $\agent_2$ 
chooses $u_2$ and observes $s_1'$, then the updated belief of $\agent_1$ via Bayesian inference is denoted $(s_1', b_1^{s_1,a_1,u_2,s_1'})$.

\startpara{Objectives and values} 
We focus on infinite-horizon expected accumulated reward
$\mathbb{E}_b^{\sigma} [Y]$ when starting from $b$
under $\sigma$, where $Y(\pi) = \sum_{k=0}^{\infty} \beta^k r(\pi(k), \pi[k])$ for an infinite history $\pi$, 
 reward structure $r : (S {\times} A) \to \mathbb{R}$ and discount $\beta \in (0, 1)$, and $\agent_1$ and $\agent_2$ maximise and minimise the expected value.
Given $\varepsilon {\geq} 0$, a profile $\sigma^{\star} = (\sigma_1^{\star}, \sigma_2^{\star})$ is an \emph{$\varepsilon$-minimax strategy profile} if 
for any $b \in S_B$, $\smash{\mathbb{E}_b^{\sigma^{\star}}}[Y] \leq \mathbb{E}_b^{\sigma_1^{\star},\sigma_2}[Y] {+} \varepsilon$ for all $\sigma_2$ and $\mathbb{E}_b^{\sigma^{\star}}[Y] \ge \mathbb{E}_b^{\sigma_1,\sigma_2^{\star}}[Y] {-} \varepsilon$ for all $\sigma_1$. If $\varepsilon{=}0$, then  $\smash{\mathbb{E}_b^{\sigma^{\star}}}[Y]$ is the \emph{value} of $\csg$, denoted $V^\star$. 

\startpara{One-sided NS-HSVI} 
HSVI is an anytime algorithm that approximates the value $V^{\star}$ via \emph{lower} and \emph{upper bound} functions, updated through heuristically generated beliefs. 
One-sided NS-HSVI~\citep{RY-GS-GN-DP-MK:23-2} works with the continuous-state space of a one-sided NS-POSG using a generalisation of $\alpha$-functions, similar to~\cite{JMP-NV-MTS-PP:06}, except it uses \emph{polyhedral} representations induced from NNs instead of Gaussian mixtures.  
For $\varepsilon > 0$, one-sided NS-HSVI returns lower and upper bound functions $V_{\mathit{lb}}^{\Gamma} , V_{\mathit{ub}}^{\Upsilon} \in \mathbb{F}(S_B)$ to approximate $V^{\star}$ such that $V_{\mathit{lb}}^{\Gamma}(b) \leq V^{\star}(b) \leq V_{\mathit{ub}}^{\Upsilon} (b)$ for all $b \in S_B$ and $V_{\mathit{ub}}^{\Upsilon} (b^{\mathit{init}}) - V_{\mathit{lb}}^{\Gamma}(b^{\mathit{init}}) \leq \varepsilon$. Given $f: S \to \mathbb{R}$ and belief $(s_1, b_1)$, let $\langle f, (s_1, b_1) \rangle = \, \mbox{$ \int_{s_E \in S_E}$} f(s_1, s_E) b_1(s_E) \textup{d}s_E $. The lower bound $V_{\mathit{lb}}^{\Gamma}$ is represented 
via a finite set $\Gamma \subseteq \mathbb{F}(S)$ of \emph{piecewise-constant (PWC)} $\alpha$-functions such that $V_{\mathit{lb}}^{\Gamma}(s_1, b_1) = \max_{\alpha \in \Gamma} \langle \alpha, (s_1, b_1) \rangle$. 
The upper bound $V_{\mathit{ub}}^{\Upsilon}$ is represented by a finite set of belief-value pairs $\Upsilon \subseteq S_B {\times} \mathbb{R}$
and computed via an LP.
\section{NS-HVSI Continual resolving}\label{sec:strategy-synthesis}

\noindent
\emph{Continual resolving}, e.g.,~\citep{MM-MS-NB-VL-DM-et-al:17}, is an online method for computing an $\varepsilon$-minimax strategy in two-player, zero-sum imperfect information EFGs;
it keeps track of an agent's  belief of its opponent state 
and opponent counterfactual values to build and solve a subgame to synthesise choices, without building a complete strategy.
It is \emph{sound}, in computing an $\varepsilon$-minimax strategy, but can be expensive 
as it needs to estimate opponent counterfactual values by traversing the game tree.

We now present a novel variant of continual resolving, which utilises the \emph{lower bound} function $V_{\mathit{lb}}^{\Gamma}$ 
computed by one-sided NS-HSVI to synthesise an $\varepsilon$-minimax strategy for $\agent_1$ that achieves the desired $\varepsilon$ distance to the value function  
at the initial belief. 
The method is efficient as it only requires solving a single LP at each stage. We first introduce the following minimax operator.

\begin{defi}[Minimax]\label{defi:minimax-operator}
The minimax operator $T : \mathbb{F}(S_B) {\rightarrow} \mathbb{F}(S_B)$ is given by:
\begin{align}
  [TV](s_1, b_1) &  = \smash{\max\nolimits_{u_1\in \mathbb{P}(A_1)} \min\nolimits_{u_2\in \mathbb{P}(A_2 \mid S)}  \mathbb{E}_{(s_1,b_1),u_1,u_2} [r(s,a)] } \nonumber \\
  & + \smash{ \beta \mbox{$\sum\nolimits_{(a_1,s_1') \in A_1 \times S_1}$} P(a_1, s_1' \mid (s_1, b_1), u_1, u_2 ) V(s'_1, b_1^{s_1,a_1, u_2, s_1'})} \label{eq:minimax-operator}
\end{align}
for $V \in \mathbb{F}(S_B)$ and $(s_1, b_1) \in S_B$, where $\mathbb{E}_{(s_1,b_1),u_1,u_2} [r(s,a)]$ is the expected value of $r$. 
\end{defi}

\begin{algorithm}[tb]
\caption{NS-HSVI continual resolving for $\agent_1\!$'s strategy via the lower bound}
\label{alg:strategy-agent-1}
\textbf{Input}: $(s_{1}^{\mathit{init}}, b_{1}^{\mathit{init}})$, a finite set of PWC functions $\Gamma \subseteq \mathbb{F}(S)$ from one-sided NS-HSVI
\begin{algorithmic}[1] %[1] enables line numbers
\State $\mathit{Resolve}_1((s_{1}^{\mathit{init}}, b_{1}^{\mathit{init}}), \alpha^{\mathit{init}})$ where  $\alpha^{\mathit{init}} = \arg\max_{\alpha \in \Gamma} \langle \alpha, (s_{1}^{\mathit{init}}, b_{1}^{\mathit{init}}) \rangle$
\Function{$\mathit{Resolve}_1$}{$(s_1, b_1),  \alpha_1$}
\State $(\overline{v}^\star, \overline{\lambda}_1^\star, \overline{p}_1^\star) \leftarrow$ solve the LP \eqref{eq:lower-bound-LP-bounded-by-alpha} at $(s_1, b_1)$
\State $u_1^{\mathit{lb}}(a_1) \leftarrow p^{\star a_1}$ for all $a_1 \in A_1$
\State sample and play $a_1 \sim u_1^{\mathit{lb}}$
\State $s_1' \leftarrow$ observed $\agent_1$'s agent state
\State $\alpha^{\star a_1, s_1'} \leftarrow \mbox{$\sum_{\alpha \in \Gamma}$} (\lambda^{\star a_1, s_1'}_{\alpha} / p^{\star a_1}) \alpha$, \quad $u_2^{\mathit{lb}} \leftarrow$ an assumed stage strategy for $\agent_2$
\State $\mathit{Resolve}_1((s_1', b_1^{s_1, a_1, u_2^{\mathit{lb}}, s_1'}), \alpha^{\star a_1, s_1'})$
\EndFunction
\end{algorithmic}
\end{algorithm}
\startpara{NS-HSVI continual resolving} Motivated by \cite[Section 9.2]{KH-BB-VK-CK:23}, our online game-playing algorithm \emph{NS-HSVI continual resolving}, see \algoref{alg:strategy-agent-1}, generates a strategy for $\agent_1$, denoted $\sigma_1^{\mathit{lb}}$, by using the HSVI lower bound instead of opponent counterfactual values used in \cite{MM-MS-NB-VL-DM-et-al:17}. Since we have pre-computed $V_{\mathit{lb}}^{\Gamma}$ offline, our NS-HSVI continual resolving 
only keeps track of a belief $(s_1, b_1)$ 
and a PWC function $\alpha_1$ in the the convex hull, $\textup{Conv}(\Gamma)$, of $\Gamma$. 
Importantly, $\Gamma$ can be used to compute an action to play and update the tracking information at each stage (a belief and a  PWC function) by solving the LP presented below, thus avoiding  
the need to estimate the opponent  counterfactual values.

\begin{defi}[Stage strategy]\label{defi:stage-strategy-agent-1}
    For $((s_1, b_1),  \alpha_1) \in S_B \times \textup{Conv}(\Gamma)$ where $b_1$ is represented by $\{ (s_E^i, \kappa_i) \}_{i=1}^{N_b}$, a stage strategy $u_1^{\mathit{lb}}$ for $\agent_1$ is such that $u_1^{\mathit{lb}}(a_1) = p^{\star a_1}$ for $a_1 \in A_1$, where $(v_{s_E^i}^{\star })_{i=1}^{N_b}$, $\smash{(\lambda^{\star a_1, s_1'}_{\alpha} )_{(a_1,s_1') \in A_1 \times S_1,\alpha \in \Gamma}}$ and $\smash{(p^{\star a_1})_{a_1 \in A_1}}$ is a solution to the following LP:
   {\rm  \begin{align}
        \lefteqn{\mbox{\rm maximise} \; \; 
        \smash{\mbox{$\sum_{i=1}^{N_b}$} \kappa_i v_{s_E^i}}  \;\; \mbox{\rm subject to}}
        \nonumber  \\
         v_{s_E^i} & \leq \mbox{$\sum_{a_1 \in A_1}$} p^{a_1} r((s_1, s_E^i),(a_1,a_2))  + \beta \mbox{$\sum_{(a_1, s_1') \in A_1 \times S_1, s_E' \in S_E}$}  \nonumber \\
         & \;\;\; \cdot \delta((s_1, s_E^i), (a_1, a_2))(s_1', s_E') \mbox{$\sum_{\alpha \in \Gamma} \lambda_{\alpha}^{a_1, s_1'}$} \alpha (s_1', s_E') \quad \mbox{for } 1 \leq i \leq N_b \mbox{ and } a_2 \in A_2 \nonumber \\
         v_{s_E^i} & \ge \alpha_1(s_1,s_E^i) \quad \mbox{for } 1 \leq i \leq N_b \nonumber \\
         \lambda^{a_1, s_1'}_{\alpha} & \ge 0 \quad \mbox{for } a_1 \in A_1, \; s_1' \in S_1 \mbox{ and } \alpha \in \Gamma \nonumber \\
          p^{a_1} & = \mbox{$\sum_{\alpha \in \Gamma}$} \lambda_{\alpha}^{a_1, s_1'} \quad \mbox{for } a_1 \in A_1 \mbox{ and } s_1' \in S_1  \nonumber \\
          \mbox{$ \sum_{a_1 \in A_1} $} p^{a_1} & = 1. \label{eq:lower-bound-LP-bounded-by-alpha}
    \end{align} }
\end{defi}
Compared with the LP in \cite{RY-GS-GN-DP-MK:23-2} for solving $[TV_{\mathit{lb}}^{\Gamma}](s_1, b_1)$, the LP  \eqref{eq:lower-bound-LP-bounded-by-alpha} includes the additional constraints $\smash{v_{s_E^i} \ge \alpha_1(s_1,s_E^i)}$ for $1 \leq i \leq N_b$ \cite[Section 9.2]{KH-BB-VK-CK:23} to refine the minimax stage strategy in $[TV_{\mathit{lb}}^{\Gamma}](s_1, b_1)$, such that
the lower bound from 
$\alpha_1$ can be kept as the game evolves, since multiple minimax stage strategies for $\agent_1$ may exist and some of them may deviate from 
$\alpha_1$.  

% Author: Till Tantau
% Source: The PGF/TikZ manual
\def\xcolorversion{2.00}
\def\xkeyvalversion{1.8}
\def\coloralpha{45}

\begin{figure}[t]
    \centering
    \begin{tikzpicture}[node distance=1.3cm,>=stealth',bend angle=45,auto]

  \tikzstyle{state}=[circle,thick,draw=blue!\coloralpha,fill=blue!20,minimum size=5mm]
  \tikzstyle{transition}=[rectangle,thick,draw=black!75,
  			  fill=black!20]

  \tikzstyle{red place}=[place,draw=red!75,fill=red!20]
  % \tikzstyle{transition}=[rectangle,thick,draw=black!75,
  % 			  fill=black!20,minimum size=4mm]

  \tikzstyle{every label}=[red]

  \begin{scope}
    % First belief
    \node [state] (p1) {} 
    [children are tokens]
     child {node [token, fill=blue!\coloralpha,  minimum size=1.5mm] {}};
    \node [state] (p2) [below of=p1, xshift=-5mm, yshift=5mm, inner sep=0, draw=red!\coloralpha,fill=red!20] {}
    [children are tokens]
     child {node [token, fill=red!\coloralpha,  minimum size=1.8mm] {}};
    \node [state] (p3) [below of=p1, xshift=5mm, yshift=5mm, inner sep=0] {}
    [children are tokens]
     child {node [token, fill=blue!\coloralpha, minimum size=3mm] {}};
    \node [state] (p11) [above of=p2, xshift=-5mm,yshift=-5mm] {} 
    [children are tokens]
     child {node [token, fill=blue!\coloralpha,  minimum size=2.5mm] {}};
    \node [transition] (a1) [below of=p2, xshift=1mm, inner sep=0.8mm, fill=white, draw=white] {\scriptsize$(s_1, b_1), \alpha_1$};
    \node [state] (p4) [right of=p3, xshift=15mm, yshift=5mm, inner sep=0, draw=orange!\coloralpha,fill=orange!20] {};
    \node [state] (p5) [right of=p4, xshift=2mm, yshift=3mm, inner sep=0] {}
    [children are tokens]
     child {node [token, fill=blue!\coloralpha, minimum size=2mm] {}};
    \node [state] (p6) [right of=p5, xshift=-3mm, yshift=3mm, inner sep=0] {}
    [children are tokens]
     child {node [token, fill=blue!\coloralpha, minimum size=2.5mm] {}};
    \node [state] (p7) [below of=p5, yshift=6mm, xshift=1.5mm, inner sep=0] {}
    [children are tokens]
     child {node [token, fill=blue!\coloralpha, minimum size=1.2mm] {}};
    \node [state] (p8) [below of=p6, yshift=2mm, xshift=1mm, inner sep=0] {}
    [children are tokens]
     child {node [token, fill=blue!\coloralpha, minimum size=2mm] {}};
    \node [state] (p9) [below of=p5, xshift=7mm, yshift=10mm, inner sep=0, draw=orange!\coloralpha,fill=orange!20] {}
    [children are tokens]
     child {node [token, fill=orange!\coloralpha, minimum size=3mm] {}};
    \node [transition] (lp1) 
    [right of=a1, xshift=7mm, inner sep=0.8mm] {\scriptsize LP: HSVI $\Gamma$} edge [pre] (a1);
    \node [transition] (b1) 
    [right of=lp1, xshift=4mm, inner sep=0.8mm] {\scriptsize \shortstack{belief \\ update}} edge [pre] (lp1);
    \node [transition] (a2) [right of=b1, xshift=7mm, inner sep=0.8mm, fill=white, draw=white] {\scriptsize$(s_1', b_1'), \alpha_1'$};
    % \node [transition] (hsvi) [below of=b1, inner sep=0.8mm, fill=white] {\scriptsize HSVI lower bound function $V_{\mathit{lb}}^{\Gamma}$};
    % \node [transition] (lp2) 
    % [right of=a2, xshift=4mm, inner sep=0.8mm] {\scriptsize LP} edge [pre] (a2);
    
    \draw[->] (a1) ++ (0, 5mm) [>=stealth'] -- +(0, 5mm) node[left, yshift=-2.5mm] {\scriptsize$(s_1, b_1)$};
    \draw[<-] (p4)++ (-5mm, 0) [thick] -- +(-17mm, 0);
    \draw[->] (p4)++ (5mm, 0) [thick] -- +(6mm, 0);
    \node (lp11) [right of=lp1, xshift=-3.5mm] {};
    \node (lp12) [right of=lp1, xshift=-2.8mm, yshift=8mm] {};
    \node (p4-1) at (p4) [xshift=-2mm, yshift=-3mm] {};
    \draw [->] plot [smooth] coordinates {(lp11) (lp12) (p4-1)} node [left, xshift=-3.8mm, yshift=-6mm] {\scriptsize$a_1 {\sim} u_1^{\mathit{lb}}$};
    \draw[->] (p4)++ (0, -4mm) -- +(0, -10mm) node [left,yshift=6mm, xshift=1mm] {\scriptsize$s_1'$};
    \draw[->] (b1)  ++ (4.4mm, 0mm) [>=stealth'] -- +(6.7mm, 0) node [above, xshift=-5mm] {};
    \draw[->] (a2) ++ (0, 5mm) [>=stealth'] -- +(0, 5mm) node[left, yshift=-2.5mm] {\scriptsize$(s_1', b_1')$};
    \node (lp13) [above of=lp1, yshift=10mm] {};
    \node (lp14) [above of=lp1, yshift=10mm, xshift=8mm] {};
    \node (p4-2) at (p4) [xshift=-2mm, yshift=3mm] {};
    \draw [->] plot [smooth] coordinates {(lp13) (lp14) (p4-2)} node [above,xshift=-9mm, yshift=1.5mm] {\scriptsize$a_2{\sim}u_2$};
    \draw[->,>=stealth'] (lp1) -- + (0, -7mm) -- node [above, yshift=-0.7mm, xshift=4mm] {\scriptsize$\alpha_1'$} +(37.2mm, -7mm) -- + (37.2mm, -4mm);
    % \node [place] (s)  [below of=p2,label=above:$s\le 3$] {};
    % \node [place] (c2) [below of=s]                       {};
    % \node [place,tokens=1] (w2) [below of=c2]                      {};

    % \node [transition] (e1) [left of=p2] {}
    %   edge [pre,bend left]                  (p1)
    %   edge [post,bend right]                (s)
    %   edge [post]                           (p2);

    % \node [transition] (e2) [left of=c2] {}
    %   edge [pre,bend right]                 (w2)
    %   edge [post,bend left]                 (s)
    %   edge [post]                           (c2);

    % \node [transition] (l1) [right of=p2] {}
    %   edge [pre]                            (p2)
    %   edge [pre,bend left]                  (s)
    %   edge [post,bend right] node[swap] {2} (p1);

    % \node [transition] (l2) [right of=c2] {}
    %   edge [pre]                            (c2)
    %   edge [pre,bend right]                 (s)
    %   edge [post,bend left]  node {2}       (w2);
  \end{scope}

  \begin{scope}
    % Second belief
    \node [state] (e1) [right of=p9, xshift=7mm, yshift=2mm] {} 
    [children are tokens]
     child {node [token, fill=blue!\coloralpha,  minimum size=2.5mm] {}};
     \node [state] (e2) [below of=e1, yshift=6mm, draw=red!\coloralpha,fill=red!20] {} 
    [children are tokens]
     child {node [token, fill=red!\coloralpha,  minimum size=1.5mm] {}};
    \node [state] (e1-1) [right of=e1, xshift=-7mm, yshift=-3mm] {} 
    [children are tokens]
     child {node [token, fill=blue!\coloralpha,  minimum size=3mm] {}};
    \node [transition] (a3) [below of=e2, xshift=1mm, inner sep=0.8mm, fill=white, draw=white] {\scriptsize$(s_1, b_1)$};
    \node [state] (e3) [right of=e2, xshift=23mm, yshift=5mm, inner sep=0, draw=orange!\coloralpha,fill=orange!20] {};
    \node [state] (e4) [right of=e3, xshift=-1mm, yshift=1mm, inner sep=0] {}
    [children are tokens]
     child {node [token, fill=blue!\coloralpha, minimum size=1.5mm] {}};
    \node [state] (e5) [right of=e4, xshift=-4mm, yshift=3mm, inner sep=0] {}
    [children are tokens]
     child {node [token, fill=blue!\coloralpha, minimum size=2mm] {}};
    \node [state] (e6) [below of=e4, xshift=6mm, yshift=10mm, inner sep=0, draw=orange!\coloralpha,fill=orange!20] {}
    [children are tokens]
     child {node [token, fill=orange!\coloralpha, minimum size=3.5mm] {}};
    \node [transition] (lp2) 
    [right of=a3, xshift=5mm, inner sep=0.8mm] {\scriptsize LP: HSVI $\Upsilon$} edge [pre] (a3);
    \node [transition] (b2) 
    [right of=lp2, xshift=4mm, inner sep=0.8mm] {\scriptsize \shortstack{belief \\ update}} edge [pre] (lp2);
    \node [transition] (a4) [right of=b2, xshift=4mm, inner sep=0.8mm, fill=white, draw=white] {\scriptsize$(s_1', b_1')$};

    \draw[->] (a3) ++ (0, 5mm) [>=stealth'] -- +(0, 5mm) node[left, yshift=-2.5mm] {\scriptsize$(s_1, b_1)$};
    \draw[<-] (e3)++ (-5mm, 0) [thick] -- +(-17mm, 0);
    \draw[->] (e3)++ (4mm, 0) [thick] -- +(4mm, 0);
    \node (lp21) [right of=lp2, xshift=-3.5mm] {};
    \node (lp22) [right of=lp2, xshift=-2.8mm, yshift=8mm] {};
    \node (e3-1) at (e3) [xshift=-2mm, yshift=-3mm] {};
    \draw [->] plot [smooth] coordinates {(lp21) (lp22) (e3-1)} node [left, xshift=-3.8mm, yshift=-6mm] {\scriptsize$a_2 {\sim} u_2^{\mathit{ub}}$};
    \node (lp23) [above of=lp2, yshift=10mm] {};
    \node (lp24) [above of=lp2, yshift=10mm, xshift=8mm] {};
    \node (e3-2) at (e3) [xshift=-2mm, yshift=3mm] {};
    \draw [->] plot [smooth] coordinates {(lp23) (lp24) (e3-2)} node [above,xshift=-10mm, yshift=1.5mm] {\scriptsize$a_1{\sim}u_1$};
    \draw[->] (e3)++ (0, -4mm) -- +(0, -10mm) node [left,yshift=6mm, xshift=1mm] {\scriptsize$s_1'$} node [right,yshift=5.5mm, xshift=-0.5mm] {\scriptsize$a_1$};
    \draw[->] (a4) ++ (0, 5mm) [>=stealth'] -- +(0, 7mm) node[left, yshift=-3.5mm] {\scriptsize$(s_1', b_1')$};
    \draw[->] (b2)  ++ (4.4mm, 0mm) [>=stealth'] -- +(6mm, 0) node [above, xshift=-5mm] {};
  \end{scope}

  \begin{pgfonlayer}{background}
    \node at (a1) {\includegraphics[width=.08\paperwidth, height=0.045\paperwidth]{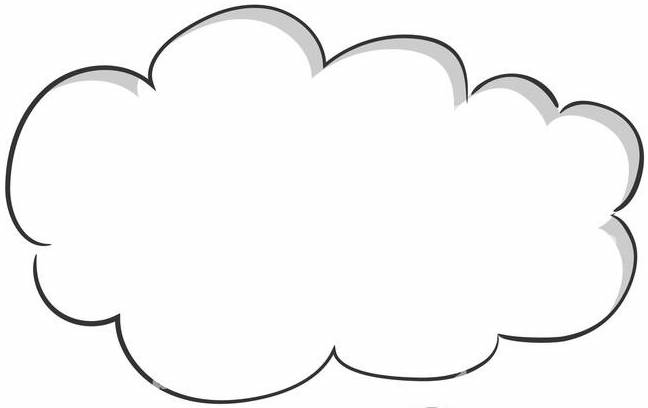}};
    \node at (a2) {\includegraphics[width=.08\paperwidth, height=0.045\paperwidth]{figures/strategy_figures/thought.jpg}};
    \node at (lp13) [xshift=-3mm] {\includegraphics[width=.022\paperwidth, height=0.02\paperwidth]{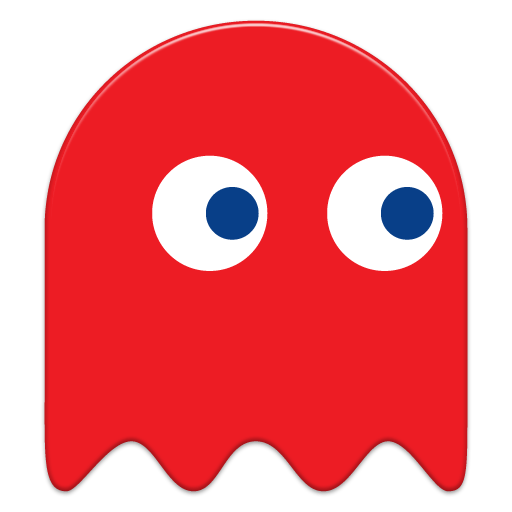}};
    \node at (a1) [yshift=-5mm] [xshift=-3mm] {\includegraphics[width=.02\paperwidth, height=0.02\paperwidth]{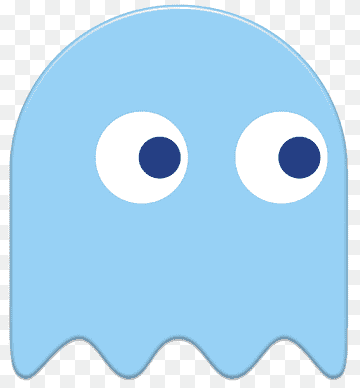}};
    \node at (a3) {\includegraphics[width=.06\paperwidth, height=0.045\paperwidth]{figures/strategy_figures/thought.jpg}};
    \node at (a4) {\includegraphics[width=.06\paperwidth, height=0.045\paperwidth]{figures/strategy_figures/thought.jpg}};
    \node at (a3) [yshift=-5mm] [xshift=-3mm] {\includegraphics[width=.022\paperwidth, height=0.02\paperwidth]{figures/strategy_figures/pacman-red.png}};
    \node at (lp23) [xshift=-3mm] {\includegraphics[width=.02\paperwidth, height=0.02\paperwidth]{figures/strategy_figures/pacman-blue-2.png}};
    \draw[dashed, join=round] (p11.north) ++ (-5.5mm, 4mm) -- ++ (78mm, 0mm) -- ++ (0mm, -36mm) -- ++ (-78mm, 0mm) -- ++ (0mm, 36mm)  {};
    \draw[dashed, join=round] (e1.north) ++ (-10.5mm, 5mm) -- ++ (71mm, 0mm) -- ++ (0mm, -36mm) -- ++ (-71mm, 0mm) -- ++ (0mm, 36mm)  {};
  \end{pgfonlayer}

\end{tikzpicture}
    \vspace*{-0.8cm}
    \caption{Left: NS-HSVI continual resolving for the partially-informed agent $\agent_1$ (blue). Right: inferred-belief strategy synthesis for the fully-informed agent $\agent_2$ (red).}\label{fig:strategy-diagram}
    \vspace*{-0.8cm}
\end{figure}

We illustrate
how the strategy $\sigma_1^{\mathit{lb}}$ is obtained in Fig.~\ref{fig:strategy-diagram} (left), where the red and orange circles indicate the current state, with the size of the interior solid circle representing the probability.  
When resolving the game locally $((s_1, b_1),  \alpha_1)$, $\agent_1$ (blue) plays an action $a_1$ sampled from a stage strategy $u_1^{\mathit{lb}}$ computed via \eqref{eq:lower-bound-LP-bounded-by-alpha}. Simultaneously, $\agent_2$ (red) plays an action $a_2$ sampled from a stage strategy $u_2$, which is unknown to $\agent_1$. The game moves to the next state and consequently $\agent_1$ observes a new agent state $s_1' \in S_1$. Based on $a_1$, $s_1'$ and an \emph{assumed} stage strategy $u_2^{\mathit{lb}}$ for $\agent_2$, see lines 7--8 of \algoref{alg:strategy-agent-1}, $\agent_1$ updates its belief to $(s_1', b_1^{s_1, a_1, u_2^{\mathit{lb}}, s_1'})$ via Bayesian inference, and generates $\alpha^{\star a_1, s_1'} \in \textup{Conv}(\Gamma)$ from a solution to \eqref{eq:lower-bound-LP-bounded-by-alpha}, which forms a new pair for the next resolving.
\begin{rek}\label{rek:two-requirements}
    There are two key properties of \algoref{alg:strategy-agent-1}. First, LP \eqref{eq:lower-bound-LP-bounded-by-alpha} admits at least one solution 
     as $V_{\mathit{lb}}^{\Gamma}$ is computed by the one-sided NS-HSVI. Second, the current state has to be in the support of the current belief, i.e., $\agent_1$ does not lose track of the current state. % and always puts positive probabilities over it. 
     Such a belief is called a proper belief and this is ensured by assuming the uniform stage strategy for $\agent_2$.
\end{rek}

\begin{lema}[Existence and proper belief]\label{lema:existence-proper-belief} For the NS-HSVI continual resolving at $((s_1, b_1),  \alpha_1)$, the LP \eqref{eq:lower-bound-LP-bounded-by-alpha} admits at least one solution, and if the current state is $(s_1, s_E)$, then $b_1(s_E) > 0$.
\end{lema}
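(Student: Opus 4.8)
The plan is to prove the two claims separately and then note that they interlock. Throughout I will in fact establish solvability of LP~\eqref{eq:lower-bound-LP-bounded-by-alpha} for \emph{any} $\alpha_1\in\textup{Conv}(\Gamma)$ and any particle belief, which suffices because every pair $((s_1,b_1),\alpha_1)$ visited by \algoref{alg:strategy-agent-1} has $\alpha_1\in\textup{Conv}(\Gamma)$ (initially $\alpha^{\mathit{init}}\in\Gamma$; and each $\alpha^{\star a_1,s_1'}=\sum_{\alpha\in\Gamma}(\lambda^{\star a_1,s_1'}_\alpha/p^{\star a_1})\alpha$ is a convex combination of $\Gamma$ since $\lambda^{\star a_1,s_1'}_\alpha\ge 0$ and $\sum_\alpha\lambda^{\star a_1,s_1'}_\alpha=p^{\star a_1}$) and, by the second claim, a non-empty belief. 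Since the LP is a maximisation over a polyhedron, it is enough to produce a feasible point and observe boundedness of the objective; the latter is immediate, as each constraint of the first family bounds $v_{s_E^i}$ above by a one-step expression built from the bounded rewards, the $p^{a_1}\ge 0$ with $\sum_{a_1}p^{a_1}=1$, and the $\lambda^{a_1,s_1'}_\alpha\ge 0$ with $\sum_\alpha\lambda^{a_1,s_1'}_\alpha=p^{a_1}\le1$ paired against the finitely-valued PWC functions of $\Gamma$, so $\sum_i\kappa_i v_{s_E^i}$ is bounded.

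For feasibility I would use the structural property, guaranteed because $\Gamma$ is produced by one-sided NS-HSVI, that each $\gamma\in\Gamma$ is \emph{backed up}: there are $u_1^\gamma\in\mathbb{P}(A_1)$ and continuation functions $\{\gamma^{a_1,s_1'}\}_{(a_1,s_1')}\subseteq\textup{Conv}(\Gamma)$ with
\begin{align*}
\gamma(s_1,s_E)\ \le\ &\ \sum_{a_1}u_1^\gamma(a_1)\,r((s_1,s_E),(a_1,a_2)) \\
&\ +\ \beta\sum_{(a_1,s_1'),\,s_E'}\delta((s_1,s_E),(a_1,a_2))(s_1',s_E')\,u_1^\gamma(a_1)\,\gamma^{a_1,s_1'}(s_1',s_E')
\end{align*}
for all $(s_1,s_E)\in S$ and $a_2\in A_2$ (the constant initial $\alpha$-function and every backup witness satisfy this). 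Writing $\alpha_1=\sum_{\gamma\in\Gamma}c_\gamma\gamma$ with $c_\gamma\ge 0$, $\sum_\gamma c_\gamma=1$, I set $p^{a_1}:=\sum_\gamma c_\gamma u_1^\gamma(a_1)$; for $(a_1,s_1')$ with $p^{a_1}>0$ I take $\tilde\alpha^{a_1,s_1'}:=\tfrac1{p^{a_1}}\sum_\gamma c_\gamma u_1^\gamma(a_1)\,\gamma^{a_1,s_1'}\in\textup{Conv}(\Gamma)$, fix a representation $\tilde\alpha^{a_1,s_1'}=\sum_{\alpha\in\Gamma}w^{a_1,s_1'}_\alpha\alpha$, and put $\lambda^{a_1,s_1'}_\alpha:=p^{a_1}w^{a_1,s_1'}_\alpha$ (all $\lambda$'s zero if $p^{a_1}=0$); finally $v_{s_E^i}:=\alpha_1(s_1,s_E^i)$. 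Taking the $c_\gamma$-weighted sum of the displayed inequalities at each particle $s_E^i$ and using $\sum_{a_1}p^{a_1}=1$ together with $\sum_\gamma c_\gamma u_1^\gamma(a_1)\gamma^{a_1,s_1'}=p^{a_1}\tilde\alpha^{a_1,s_1'}=\sum_\alpha\lambda^{a_1,s_1'}_\alpha\alpha$ then verifies every constraint of \eqref{eq:lower-bound-LP-bounded-by-alpha}, so the LP is feasible and attains an optimum.

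For the proper-belief claim I would induct on the stage. Initially $(s_1,s_E)$ is drawn from $b^{\mathit{init}}$, so $s_E$ lies in the particle support of $b_1^{\mathit{init}}$ and $b_1^{\mathit{init}}(s_E)>0$. For the step, let the current state be $(s_1,s_E)$ with $b_1(s_E)>0$; $\agent_1$ plays the sampled $a_1$, $\agent_2$ plays some $a_2$ under its unknown true strategy, the game moves to the sampled successor $((\loc_1',\per_1'),s_E')$ with $\per_1'=\obs_1(\loc_1',s_E')$, and $\agent_1$ updates to $b_1^{s_1,a_1,u_2^{\mathit{lb}},s_1'}$ via the assumed \emph{uniform} stage strategy $u_2^{\mathit{lb}}$. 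In the Bayesian update the unnormalised weight of $s_E'$ is a sum of non-negative terms containing the term for the true pair $(s_E,a_2)$, namely $b_1(s_E)\cdot u_2^{\mathit{lb}}(s_1,s_E)(a_2)\cdot\delta((s_1,s_E),(a_1,a_2))(s_1',s_E')$ times the percept-consistency indicator, and each factor is strictly positive: $b_1(s_E)>0$ by hypothesis; $u_2^{\mathit{lb}}(s_1,s_E)(a_2)=1/|A_2|>0$ since the uniform strategy has full support; $\delta((s_1,s_E),(a_1,a_2))(s_1',s_E')>0$ since $(s_1',s_E')$ was drawn from this finite-branching distribution; and the indicator is $1$ because $\per_1'=\obs_1(\loc_1',s_E')$ by construction. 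Hence the normaliser is positive and $b_1^{s_1,a_1,u_2^{\mathit{lb}},s_1'}(s_E')>0$.

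I expect the feasibility half to be the main obstacle: the extra constraints $v_{s_E^i}\ge\alpha_1(s_1,s_E^i)$ could a priori be incompatible with the minimax caps $v_{s_E^i}\le\min_{a_2}[\cdots]$, so everything hinges on the backed-up characterisation of $\Gamma$ and on transporting it from $\Gamma$ to $\textup{Conv}(\Gamma)$, where the bilinear term $u_1(a_1)\,\gamma^{a_1,s_1'}(\cdot)$ forces the explicit reweighting above in place of a plain convex combination. The properness claim is then routine; its point is exactly that the full support of the assumed $u_2^{\mathit{lb}}$ — the reason the uniform choice is preferred to an assumed minimax strategy — prevents the true successor state from being filtered out of the belief.
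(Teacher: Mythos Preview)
Your proof is correct and follows essentially the same approach as the paper. The paper argues existence by induction on the sequence of NS-HSVI point-based updates and, for the inductive step (extending feasibility from $\Gamma$ to $\textup{Conv}(\Gamma')$), defers to \cite[Lemma 9.6]{KH-BB-VK-CK:23}; you instead unwind that lemma directly, asserting the backup-witness structure of every $\gamma\in\Gamma$ and explicitly building a feasible point for any $\alpha_1\in\textup{Conv}(\Gamma)$ via the bilinear reweighting $p^{a_1}=\sum_\gamma c_\gamma u_1^\gamma(a_1)$, $\lambda^{a_1,s_1'}_\alpha=p^{a_1}w^{a_1,s_1'}_\alpha$. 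Your proper-belief argument is the same as the paper's, only spelled out in more detail.
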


\begin{proof}
     The optimal value $V^{\star}$ has lower and upper bounds $L = \min\nolimits_{s \in S, a \in A} r(s,a) /(1- \beta)$ and $
    U = \max\nolimits_{s \in  S, a \in A} r(s,a)/(1- \beta)$.
     Let $V_{\mathit{lb}}^{\Gamma'}$ be the lower bound of one-sided NS-HSVI~\citep{RY-GS-GN-DP-MK:23-2}. Existence follows if 
    \eqref{eq:lower-bound-LP-bounded-by-alpha} admits one solution after any point-based update. 
    The key is to check the feasibility of the first two constraints for $v_{s_E^i}$ of \eqref{eq:lower-bound-LP-bounded-by-alpha}. 
    Initially $\Gamma'= \{ \alpha_0 \}$ with $\alpha_0(s) = L$ for $s \in S$, and since $\alpha_1 = \alpha_0$ it is straightforward to verify that \eqref{eq:lower-bound-LP-bounded-by-alpha} admits at least one solution. 
    
    For the inductive step, we assume that \eqref{eq:lower-bound-LP-bounded-by-alpha} admits at least one solution for $\alpha \in \Gamma$ and $\alpha_1 \in \textup{Conv}(\Gamma)$. 
    The point-based update computes a new set $\Gamma' = \Gamma \cup \{ \alpha^{\star} \} $ of PWC functions, where \eqref{eq:lower-bound-LP-bounded-by-alpha} admits at least one solution for  $\alpha_1 = \alpha^{\star}$. Thus, \eqref{eq:lower-bound-LP-bounded-by-alpha} admits at least one solution for $\alpha \in \Gamma$ and $\alpha_1 \in \textup{Conv}(\Gamma) \cup \{ \alpha^{\star}\}$. 
    Following the proof of \cite[Lemma 9.6]{KH-BB-VK-CK:23}, we can show that \eqref{eq:lower-bound-LP-bounded-by-alpha} admits at least one solution for $\alpha \in \Gamma'$ and $\alpha_1 \in \textup{Conv}(\textup{Conv}(\Gamma) \cup \{ \alpha^{\star}\})$, i.e., $\alpha_1 \in \textup{Conv}(\Gamma')$.

    To show the belief is proper, since the initial state is sampled from $(s_{1}^{\mathit{init}}, b_{1}^{\mathit{init}})$, which is known to $\agent_1$, and the uniform stage strategy for $\agent_2$ is assumed, then the result follows.
\end{proof}
\noindent
We next show that our NS-HSVI continual resolving can inherit the soundness of continual resolving, i.e., it can compute an $\varepsilon$-minimax strategy for $\agent_1$.  

\begin{theorem}[$\varepsilon$-minimax strategy for $\agent_1$]\label{thom:strategy-agent-1}The strategy $\sigma_1^{\mathit{lb}}$ in \algoref{alg:strategy-agent-1} is an $\varepsilon$-minimax strategy for $\agent_1$  at $(s_{1}^{\mathit{init}}, b_{1}^{\mathit{init}})$, i.e., $\mathbb{E}_{(s_{1}^{\mathit{init}}, b_{1}^{\mathit{init}})}^{\sigma_1^{\mathit{lb}},\sigma_2} [Y] \ge V^{\star}(s_{1}^{\mathit{init}}, b_{1}^{\mathit{init}}) - \varepsilon$ for all $\sigma_2 \in \Sigma_2$.
\end{theorem}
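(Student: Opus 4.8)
The plan is to reduce the statement to the single inequality $\mathbb{E}_{(s_{1}^{\mathit{init}}, b_{1}^{\mathit{init}})}^{\sigma_1^{\mathit{lb}},\sigma_2}[Y] \ge \langle \alpha^{\mathit{init}}, (s_{1}^{\mathit{init}}, b_{1}^{\mathit{init}}) \rangle$ for every $\sigma_2 \in \Sigma_2$, and then close with the NS-HSVI guarantee. Indeed, by the choice of $\alpha^{\mathit{init}}$ in \algoref{alg:strategy-agent-1} we have $\langle \alpha^{\mathit{init}}, (s_{1}^{\mathit{init}}, b_{1}^{\mathit{init}}) \rangle = V_{\mathit{lb}}^{\Gamma}(s_{1}^{\mathit{init}}, b_{1}^{\mathit{init}})$, and since $V_{\mathit{lb}}^{\Gamma} \le V^{\star} \le V_{\mathit{ub}}^{\Upsilon}$ with $V_{\mathit{ub}}^{\Upsilon}(b^{\mathit{init}}) - V_{\mathit{lb}}^{\Gamma}(b^{\mathit{init}}) \le \varepsilon$, we get $V_{\mathit{lb}}^{\Gamma}(b^{\mathit{init}}) \ge V^{\star}(b^{\mathit{init}}) - \varepsilon$. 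So the whole theorem reduces to lower-bounding the actual payoff of $\sigma_1^{\mathit{lb}}$ against an arbitrary opponent by the initial $\alpha$-value.

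For this, I would track along a play the sequence of pairs $((s_{1,t}, b_{1,t}), \alpha_{1,t})$ generated by the recursive calls to $\mathit{Resolve}_1$, together with the \emph{true} state $s_t = (s_{1,t}, s_{E,t})$. By \lemaref{lema:existence-proper-belief}, every LP \eqref{eq:lower-bound-LP-bounded-by-alpha} encountered is feasible (so the solution, hence $\alpha_{1,t+1} = \alpha^{\star a_{1,t}, s_{1,t+1}}$, is well defined) and the belief is proper, so $s_{E,t}$ coincides with one of the particles $s_E^i$ of $b_{1,t}$, and the LP contains the corresponding variable $v_{s_E^i}^{\star}$. The heart of the argument is a one-stage inequality at the true state: substituting the identity $\sum_{\alpha\in\Gamma}\lambda_{\alpha}^{\star a_1, s_1'}\alpha = p^{\star a_1}\,\alpha^{\star a_1, s_1'}$ (the definition of $\alpha^{\star a_1, s_1'}$ in \algoref{alg:strategy-agent-1}; terms with $p^{\star a_1}=0$ drop out as those actions are played with probability $0$) into the first LP constraint for the index $i$ with $s_E^i = s_{E,t}$, and combining it with the refinement constraint $v_{s_E^i}^{\star}\ge\alpha_1(s_1, s_E^i)$, yields
\[
  \alpha_{1,t}(s_t) \;\le\; \textstyle\sum_{a_1\in A_1} u_1^{\mathit{lb}}(a_1)\Big( r(s_t,(a_1,a_2)) + \beta\!\sum_{s_1'\in S_1,\, s_E'\in S_E}\! \delta(s_t,(a_1,a_2))(s_1',s_E')\,\alpha^{\star a_1, s_1'}(s_1',s_E') \Big)
\]
for \emph{every} $a_2\in A_2$. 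Since this holds pointwise in $a_2$ and the two agents act concurrently, averaging the right-hand side over $a_2$ drawn from $\sigma_2$'s stage choice and recalling that $\alpha_{1,t+1}=\alpha^{\star a_{1,t}, s_{1,t+1}}$ is measurable given $a_{1,t}$ and $s_{1,t+1}$, we obtain $\alpha_{1,t}(s_t) \le \mathbb{E}\big[\, r(s_t,a_t) + \beta\,\alpha_{1,t+1}(s_{t+1}) \mid \mathcal{F}_t\,\big]$, where $\mathcal{F}_t$ is the history through stage $t$ (which determines $s_t$, $b_{1,t}$ and $\alpha_{1,t}$).

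Taking total expectations, multiplying by $\beta^t$ and telescoping over $t = 0,\dots,T-1$ gives $\mathbb{E}[\alpha_{1,0}(s_0)] \le \sum_{k=0}^{T-1}\beta^k\mathbb{E}[r(s_k,a_k)] + \beta^T\mathbb{E}[\alpha_{1,T}(s_T)]$. The functions in $\textup{Conv}(\Gamma)$ are uniformly bounded ($\Gamma$ finite, each $\alpha$ PWC on a bounded domain), so $\beta^T\mathbb{E}[\alpha_{1,T}(s_T)]\to 0$, while $\sum_{k=0}^{T-1}\beta^k\mathbb{E}[r(s_k,a_k)]\to\mathbb{E}_{(s_{1}^{\mathit{init}}, b_{1}^{\mathit{init}})}^{\sigma_1^{\mathit{lb}},\sigma_2}[Y]$ by dominated convergence (bounded $r$, $\beta<1$). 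Since $s_0\sim(s_{1}^{\mathit{init}}, b_{1}^{\mathit{init}})$ and $\alpha_{1,0}=\alpha^{\mathit{init}}$, the left-hand side equals $\langle\alpha^{\mathit{init}},(s_{1}^{\mathit{init}}, b_{1}^{\mathit{init}})\rangle$, which establishes the desired inequality and hence the theorem.

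The main obstacle is exactly the one-stage inequality \emph{at the true state} rather than merely on average over the belief: it needs both halves of \lemaref{lema:existence-proper-belief} (feasibility, so the LP solution carrying $\alpha^{\star a_1,s_1'}$ exists; properness, so $s_{E,t}$ is an actual particle to which an LP constraint is attached), and it crucially exploits that the first LP constraint holds for \emph{all} $a_2\in A_2$ — this is also precisely what makes the added refinement constraints $v_{s_E^i}\ge\alpha_1(s_1,s_E^i)$ compatible with the continuation guarantee carried by $\alpha^{\star a_1, s_1'}$. A secondary subtlety worth spelling out is that $\mathit{Resolve}_1$ updates the belief using the \emph{assumed} uniform strategy $u_2^{\mathit{lb}}$ rather than the true $\sigma_2$; this does not affect the argument, since the displayed one-stage inequality is purely state-based — it involves $\delta$ and the $\alpha$-functions but not $b_{1,t}$ — and $b_{1,t}$ re-enters only through the properness property, which the uniform assumption guarantees.
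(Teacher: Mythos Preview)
Your proof is correct and rests on the same key insight as the paper's: the one-step inequality at the \emph{true} state obtained by combining the first LP constraint (which holds for every $a_2$) with the refinement constraint $v_{s_E^i}^{\star}\ge\alpha_1(s_1,s_E^i)$, together with both halves of \lemaref{lema:existence-proper-belief}. The only difference is organizational. The paper proceeds by backward induction on a horizon $t$, working with the truncated strategies $\sigma_1^{b,\alpha_1,t}$ (follow $\mathit{Resolve}_1$ for $t$ stages, then uniform) and carrying an explicit error term $\beta^t(U-L)$ in the state-wise bound $\mathbb{E}_{(s_1,s_E)}^{\sigma_1^{b,\alpha_1,t},\sigma_2}[Y]\ge\alpha_1(s_1,s_E)-\beta^t(U-L)$, before integrating over the initial belief and letting $t\to\infty$. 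You instead package the one-step inequality as a conditional-expectation (supermartingale-type) relation $\alpha_{1,t}(s_t)\le\mathbb{E}[r(s_t,a_t)+\beta\,\alpha_{1,t+1}(s_{t+1})\mid\mathcal{F}_t]$ and telescope forward, letting the boundedness of $\textup{Conv}(\Gamma)$ kill the tail. The two organizations are standard equivalents; yours is slightly more direct, while the paper's makes the finite-horizon approximation explicit. Your closing remark that the assumed $u_2^{\mathit{lb}}$ enters only through properness of the belief (not through the one-step inequality itself) is also exactly the point and matches the paper's use of \lemaref{lema:existence-proper-belief}.
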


\begin{proof}
    We adapt the proof presented for discrete one-sided POSGs in \cite[Proposition 9.7]{KH-BB-VK-CK:23}. 
    Let $\mathbb{E}_s^{\sigma} [Y]$ denote the expected value of $Y$
    when starting from  $s \in S$
    under $\sigma \in \Sigma$.
    Consider any $b = (s_1, b_1) \in S_B$ and $\alpha_1 \in \textup{Conv}(\Gamma)$. We assume that $\agent_1$ follows $\mathit{Resolve}_1$ in \algoref{alg:strategy-agent-1} at the first $t$ stages and then follows the uniform stage strategy.
    We denote such a strategy by $\sigma_1^{b, \alpha_1,t}$.
    \lemaref{lema:existence-proper-belief} guarantees that $\mathit{Resolve}_1$ can run for $t$ stages. We next prove that, for any $s_E$ with $b_1(s_E) > 0$, the expected value of $Y$ from $(s_1, s_E)$ under $\sigma_1^{b, \alpha_1,t}$ has the bound by $\alpha_1$:
    \begin{equation}\label{eq:expected-value-strategy-agent-1-lb}
        \smash{\mathbb{E}_{(s_1, s_E)}^{\sigma_1^{b, \alpha_1,t},\sigma_2}[Y] \ge \alpha_1(s_1, s_E) - \beta^{t} ( U - L) \quad \textup{ for all } \sigma_2 \in \Sigma_2.}
    \end{equation}
    We prove \eqref{eq:expected-value-strategy-agent-1-lb} by induction on $t \in \mathbb{N}$. For $t{=}0$, 
    \eqref{eq:expected-value-strategy-agent-1-lb} holds as $U$ and $L$ are trivial lower and upper bounds, 
    and $\alpha_1 \in \textup{Conv}(\Gamma)$ and $\alpha(s) \leq U$ for all $s \in S$ and $\alpha \in \Gamma$.
    We assume \eqref{eq:expected-value-strategy-agent-1-lb} holds for the first $t$ stages and any $b' = (s_1', b_1') \in S_B$. %, i.e., 
    The strategy $\smash{\sigma_1^{b, \alpha_1,t+1}}$ implies that, at $(s_1, b_1)$, $\agent_1$ takes 
    $u_1^{\mathit{lb}}$ (line 4) and then follows the strategy $\sigma_1^{b', \alpha',t}$ if $a_1$ is taken and $s_1'$ is observed, where 
    $b' = (s_1', b_1')$, $\smash{b_1' = b_1^{s_1, a_1, u_2^{\mathit{lb}}, s_1'}}$ and $\alpha' = \alpha^{\star a_1, s_1'}$. Letting $u_2 \in \mathbb{P}(A_2 \mid S)$ be the stage strategy of $\agent_2$ at $b$ given by $\sigma_2$, using $b_1(s_E) > 0$ by \lemaref{lema:existence-proper-belief}, the left side of \eqref{eq:expected-value-strategy-agent-1-lb} by replacing $t$ with $t+1$ equals: 
    \begin{align}
        % &   =    \smash{\mathbb{E}_{u_1^{\mathit{lb}},u_2} [r((s_1, s_E),a)]} + \beta \mbox{$\sum_{a_1,s_1', s_E'}$} P(a_1, s_1', s_E' \mid (s_1, s_E), u_1^{\mathit{lb}}, u_2 ) \smash{\mathbb{E}_{(s_1', s_E')}^{\sigma_1^{b', \alpha',t - 1},\sigma_2} [Y] } \nonumber \\
        \lefteqn{\mathbb{E}_{u_1^{\mathit{lb}},u_2} [r((s_1, s_E),a)] + \beta \mbox{$\sum_{(a_1,a_2) \in A \wedge (s_1', s_E') \in S}$} u_1^{\mathit{lb}}(a_1) u_2(a_2 \mid s_1, s_E)}  \nonumber \\
        & \qquad \cdot \delta((s_1, s_E), (a_1, a_2))(s_1', s_E') \mathbb{E}_{(s_1', s_E')}^{\sigma_1^{b', \alpha',t},\sigma_2} [Y] & \!\!\!\!\!\!\!\!\!\!\!\!\!\!\!\!\!\!\!\!\!\!\!\!\!\!\!\!\!\!\!\!\!\!\!\!\!\! \!\!\!\!\!\!\!\!\!\!\!\!\!\!\!\!\!\!\!\!\!\!\!\!\!\!\!\!\textup{by definition of $u_1^{\mathit{lb}}$, $u_2$ and $\delta$} \nonumber \\
        & \ge \min\nolimits_{a_2 \in A_2} \Big ( \mbox{$\sum_{a_1 \in A_1}$}  u_1^{\mathit{lb}}(a_1) r((s_1, s_E),(a_1, a_2)) + \beta \mbox{$\sum_{a_1 \in A_1 \wedge (s_1', s_E') \in S}$} u_1^{\mathit{lb}}(a_1)  & \nonumber \\
        & \qquad \cdot \delta((s_1, s_E), (a_1, a_2))(s_1', s_E') \mathbb{E}_{(s_1', s_E')}^{\sigma_1^{b', \alpha',t},\sigma_2} [Y] \Big) & \!\!\!\!\!\!\!\!\!\!\!\!\!\!\!\!\!\!\!\!\!\!\!\! \!\!\textup{by linearity in $u_2$} \nonumber \\
        & \ge \smash{\min\nolimits_{a_2 \in A_2} \big ( \mbox{$\sum_{a_1 \in A_1}$}  u_1^{\mathit{lb}}(a_1) r((s_1, s_E),(a_1, a_2)) + \beta \mbox{$\sum_{a_1 \in A_1 \wedge (s_1', s_E') \in S}$} u_1^{\mathit{lb}}(a_1)} & \nonumber \\
        & \qquad \cdot \delta((s_1, s_E), (a_1, a_2))(s_1', s_E') ( \alpha^{\star a_1, s_1'} (s_1', s_E') - \beta^{t}(U - L) ) \big) & \!\!\!\!\!\!\!\!\!\!\!\!\!\!\!\!\!\!\!\!\!\!\!\!\!\!\!\!\textup{by induction} \nonumber \\
        & = \smash{\min\nolimits_{a_2 \in A_2} \big ( \mbox{$\sum_{a_1 \in A_1}$}  u_1^{\mathit{lb}}(a_1) r((s_1, s_E),(a_1, a_2)) + \beta \mbox{$\sum_{a_1 \in A_1 \wedge (s_1', s_E') \in S}$} u_1^{\mathit{lb}}(a_1)} & \nonumber  \\
        & \qquad \cdot \delta((s_1, s_E), (a_1, a_2))(s_1', s_E') \alpha^{\star a_1, s_1'} (s_1', s_E') \big) - \beta^{t+1}(U - L) & \!\!\!\!\!\!\!\!\!\!\!\!\!\!\!\!\!\!\!\!\!\!\!\!\!\!\!\!\mbox{rearranging} \nonumber \\
        & \ge v^{\star}_{s_E} - \beta^{t+1}(U - L)  & \!\!\!\!\!\!\!\!\!\!\!\!\!\!\!\!\!\!\!\!\!\!\!\!\!\!\!\!\!\!\!\!\!\!\!\!\!\!\!\!\!\!\!\!\!\!\!\!\!\!\!\!\!\!\!\!\!\!\!\!\!\!\!\!\!\!\!\!\!\!\!\!\!\!\!\!\!\!\!\!\!\!\!\!\!\!\!\!\!\!\!\!\!\!\!\!\!\!\!\!\!\!\! \mbox{since $(\overline{v}^\star, \overline{\lambda}_1^\star, \overline{p}_1^\star)$ is a solution to \eqref{eq:lower-bound-LP-bounded-by-alpha} (first constraint)} \nonumber \\
        & \ge  \alpha_1(s_1, s_E) - \beta^{t+1}(U -L) & \!\!\!\!\!\!\!\!\!\!\!\!\!\!\!\!\!\!\!\!\!\!\!\!\!\!\!\!\!\!\!\!\!\!\!\!\!\!\!\!\!\!\!\!\!\!\!\!\!\!\!\!\!\!\!\!\!\!\!\!\!\!\!\!\!\!\!\!\!\!\!\!\!\!\!\!\!\!\!\!\!\!\!\!\!\!\!\!\!\!\!\!\!\!\!\!\!\!\!\!\! \textup{since $(\overline{v}^\star, \overline{\lambda}_1^\star, \overline{p}_1^\star)$ is a solution to \eqref{eq:lower-bound-LP-bounded-by-alpha} (second constraint)}\nonumber 
    \end{align}
    and hence \eqref{eq:expected-value-strategy-agent-1-lb} holds. Letting $\sigma_1^{\mathit{lb}} = \lim_{t \rightarrow \infty} \sigma_1^{b^{\mathit{init}}, \alpha^{\mathit{init}}, t}$
    by definition:
    \begin{align*}
        \mathbb{E}_{b^{\mathit{init}}}^{\sigma_1^{\mathit{lb}},\sigma_2} [Y] & = \mbox{$\int_{s_E \in S_E}$} b_1^{\mathit{init}} (s_E) \mathbb{E}_{(s_{1}^{\mathit{init}}, s_E)}^{\sigma_1^{\mathit{lb}},\sigma_2} [Y]  \textup{d} s_E \\
        & \ge \langle \alpha^{\mathit{init}}, (s_{1}^{\mathit{init}}, b_{1}^{\mathit{init}}) \rangle & \mbox{by \eqref{eq:expected-value-strategy-agent-1-lb} and definition of $\langle \cdot, \cdot \rangle$} \\
        & = V_{\mathit{lb}}^{\Gamma} (b^{\mathit{init}}) & \mbox{by line 1 of \algoref{alg:strategy-agent-1}} \\
        & \ge V^{\star}(b^{\mathit{init}}) - \varepsilon \nonumber & \!\!\!\!\!\!\!\!\!\!\mbox{since $V_{\mathit{lb}}^{\Gamma}$ is returned by one-sided NS-HSVI}
    \end{align*}
     which completes the proof.
\end{proof}

\vspace*{-0.4cm}
\section{Inferred-Belief Strategy Synthesis}\label{sec:strategy_agent_2}
\noindent
We complement our variant of continual resolving with strategy synthesis for $\agent_2$, which utilises the \emph{upper bound} function $V_{\mathit{ub}}^{\Upsilon}$ pre-computed offline and keeps track of an 
\emph{inferred} belief about what $\agent_1$ believes, which could differ from $\agent_1\!$'s true belief. 
Any offline method for fully-observable stochastic games could instead be used,  
with the associated high computational and storage cost of generating a complete strategy. Instead, we present an efficient \emph{online} 
algorithm, where only one LP is solved at each stage, to synthesise an $\varepsilon$-minimax strategy for $\agent_2$. Since we have pre-computed the offline upper bound, this strategy can guarantee the minimax value from the initial belief, which is known to both agents, but cannot optimally employ the suboptimal actions of $\agent_1$ during play. 

\begin{algorithm}[tb]
\caption{Inferred-belief strategy synthesis for $\agent_2$ via the upper bound}
\label{alg:strategy-agent-2}
\textbf{Input}: $(s_{1}^{\mathit{init}}, b_{1}^{\mathit{init}})$, a finite set of belief-value pairs $\Upsilon$ by one-sided NS-HSVI
\begin{algorithmic}[1] %[1] enables line numbers
\State $\mathit{Resolve}_2(s_{1}^{\mathit{init}}, b_{1}^{\mathit{init}})$
\Function{$\mathit{Resolve}_2$}{$s_1, b_1$}
\State $\smash{u_2^{\mathit{ub}} \leftarrow }$ $\agent_2\!$'s minimax strategy in $\smash{[TV_{\mathit{ub}}^{\Upsilon}](s_1, b_1)}$, \quad $(s_1, s_E) \leftarrow$ current observed state
\State sample and play $a_2 \sim u_2^{\mathit{ub}}(\cdot \mid s_1, s_E)$
\State $(a_1, s_1') \leftarrow $ $\agent_1\!$'s action and updated agent state
\State $\mathit{Resolve}_2(s_1', b_1^{s_1, a_1, u_2^{\mathit{ub}}, s_1'})$
\EndFunction
\end{algorithmic}
\end{algorithm}

\startpara{Inferred-belief strategy synthesis} Our inferred-belief strategy synthesis  for $\agent_2$ (\algoref{alg:strategy-agent-2}) returns a strategy  $\sigma_2^{\mathit{ub}}$ based on
$V_{\mathit{ub}}^{\Upsilon}$. The main idea of $\sigma_2^{\mathit{ub}}$ is that $\agent_2$ keeps a belief $(s_1, b_1)$, about $\agent_1\!$'s belief at the current stage, and then computes an action via $\mathit{Resolve}_2$
based on $(s_1, b_1)$ and $V_{\mathit{ub}}^{\Upsilon}$. This belief $(s_1, b_1)$ is \emph{inferred}, as $\agent_2$ has no access to what $\agent_1$ actually believes about the state, except the initial belief  which is common knowledge. However, since $\agent_2$ is fully-informed, it can simulate an inferred belief update of $\agent_1$, i.e., its belief about $\agent_1\!$'s next belief.

We illustrate the obtained strategy $\sigma_2^{\mathit{ub}}$ in Fig.~\ref{fig:strategy-diagram} (right). If $(s_1, b_1)$ is what $\agent_2$ believes about $\agent_1\!$'s belief 
and $(s_1, s_E)$ is the current state observed by $\agent_2$, then $\agent_2$ chooses $a_2$ sampled from the stage strategy $u_2^{\mathit{ub}}$ conditioned on $(s_1, s_E)$ in 
$[TV_{\mathit{ub}}^{\Upsilon}](s_1, b_1)$ computed via an LP \citep{RY-GS-GN-DP-MK:23-2}. At the same time, $\agent_1$ takes $a_1 \in A_1$ sampled from a stage strategy $u_1$, where $\agent_2$ does not know $u_1$. Then, the game moves to the next state and thus $\agent_1$ observes $s_1' \in S_1$. Based on $a_1$, $s_1'$ and  $u_2^{\mathit{ub}}$, $\agent_2$ updates its belief about $\agent_1\!$'s belief via Bayesian inference to $\smash{(s_1', b_1^{s_1, a_1, u_2^{\mathit{ub}}, s_1'})}$.

We next show that the inferred-belief strategy is sound, i.e., 
the inferred-belief carries enough information to generate  an $\varepsilon$-minimax strategy for $\agent_2$. 

\begin{lema}[Monotonicity]\label{lema:strategy-agent-2} If $V_{\mathit{ub}}^{\Upsilon}$ is an upper bound generated during the one-sided NS-HSVI, then $[TV_{\mathit{ub}}^{\Upsilon}](s_1, b_1) \leq  V_{\mathit{ub}}^{\Upsilon}(s_1, b_1)$ for all $(s_1, b_1) \in S_B$. 
\end{lema}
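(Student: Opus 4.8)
The plan is to show $[TV_{\mathit{ub}}^{\Upsilon}](s_1, b_1) \le V_{\mathit{ub}}^{\Upsilon}(s_1, b_1)$ by an induction on the sequence of upper bound functions produced by one-sided NS-HSVI, mirroring the standard HSVI argument that the initial upper bound is a valid upper bound on $V^\star$ and that point-based updates preserve this together with the property of being a fixed-point over-approximation. Concretely, let $V_0 \ge V_1 \ge \cdots$ denote the successive upper bounds maintained by the algorithm, with $V_k = V_{\mathit{ub}}^{\Upsilon_k}$ for the current set $\Upsilon_k$ of belief-value pairs. I would first establish the base case: the initial upper bound $V_0$ is the constant function $U = \max_{s,a} r(s,a)/(1-\beta)$ (the trivial upper bound on $V^\star$ from Lemma~\ref{lema:existence-proper-belief}), and since $T$ is a one-step minimax backup with reward bounded by $(1-\beta)U$ and discounted continuation bounded by $\beta U$, we get $[TV_0](s_1,b_1) \le (1-\beta)U + \beta U = U = V_0(s_1,b_1)$.

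For the inductive step, I would use two ingredients. First, \emph{monotonicity of the operator $T$}: if $V \le W$ pointwise on $S_B$ then $TV \le TW$; this is immediate from the definition in \defiref{defi:minimax-operator} since both the $\min$ over $u_2$ and the linear combination with nonnegative coefficients $P(a_1,s_1'\mid\cdot)$ preserve pointwise inequalities, and only the continuation term depends on $V$. Second, the defining property of an NS-HSVI point-based update at a belief $b'$: the new bound $V_{k+1}$ satisfies $V_{k+1} \le V_k$ everywhere, and at the updated belief $V_{k+1}(b') = [TV_k](b')$, while at all other beliefs $V_{k+1}$ is the (convex/LP-based) interpolation that still dominates $[TV_k]$ — this is exactly how the upper bound update is constructed in \citep{RY-GS-GN-DP-MK:23-2}, and I would cite the corresponding lemma there. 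Combining: assuming $[TV_k] \le V_k$, monotonicity gives $[TV_{k+1}] \le [TV_k] \le V_k$; and since the update produces $V_{k+1} \ge [TV_k]$ pointwise (the interpolation is an over-approximation of the backup values) we also need $[TV_{k+1}] \le V_{k+1}$, which follows because $[TV_{k+1}] \le [TV_k] \le V_{k+1}$. Hence the invariant $[TV_{\mathit{ub}}^{\Upsilon}] \le V_{\mathit{ub}}^{\Upsilon}$ is maintained at every iteration, and in particular for the $V_{\mathit{ub}}^{\Upsilon}$ returned or used in \algoref{alg:strategy-agent-2}.

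The step I expect to be the main obstacle is pinning down precisely the property of the NS-HSVI \emph{upper-bound update} — namely that, after updating at $b'$, the resulting LP-interpolated function $V_{k+1}$ still satisfies $V_{k+1} \ge [TV_k]$ at \emph{every} belief (not just at $b'$), so that the operator inequality propagates. In the finite-state one-sided POSG case this is \cite[Lemma 9.x / the upper-bound monotonicity lemma]{KH-BB-VK-CK:23}, and in the continuous NS setting it is the analogous statement in \citep{RY-GS-GN-DP-MK:23-2}; the subtlety is that the continuous belief space and the polyhedral/particle representation mean the interpolation is via the upper-bound LP rather than a simple lower convex envelope, so I would need to invoke the concavity/over-approximation guarantee of that LP explicitly. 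Once that is in hand, the rest is the routine monotone-contraction bookkeeping sketched above, and the lemma follows by induction on the NS-HSVI iterations.
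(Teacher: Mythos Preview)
Your induction on the sequence of NS-HSVI upper bounds, with the constant-$U$ base case and the use of monotonicity of $T$, is exactly the skeleton the paper uses. The one place where your plan is underspecified is the step you correctly flag as the obstacle: the claim that the LP-interpolated $V_{k+1}$ dominates $[TV_k]$ everywhere is \emph{not} a property of the LP alone, and does not follow from any over-approximation lemma you could cite directly from \citep{RY-GS-GN-DP-MK:23-2}. What is needed is that the backed-up function $[TV_{\mathit{ub}}^{\Upsilon}]$ is itself convex in $b_1$ and $K_{\mathit{ub}}$-Lipschitz; the paper establishes this as a preliminary step (using the $\alpha$-function representation of $[TV_{\mathit{ub}}^{\Upsilon}]$ from \citep[Theorems~6--7]{RY-GS-GN-DP-MK:23-2} and the continuity bound from \citep[Theorem~1]{RY-GS-GN-DP-MK:23-2}). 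Once that is in hand, the LP form $V_{\mathit{ub}}^{\Upsilon^{t+1}}(b') = \sum_i \lambda_i^\star y_i + K_{\mathit{ub}}(b_1',\sum_i \lambda_i^\star b_1^i)$, combined with $y_i \ge [TV_{\mathit{ub}}^{\Upsilon^{t+1}}](b^i)$ at every sample point (old points by the induction hypothesis, the new point by construction and monotonicity of $T$), yields the inequality by applying convexity and then the Lipschitz bound.

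A minor organisational difference: the paper goes straight to $V_{k+1} \ge [TV_{k+1}]$ using $y_i \ge [TV_{k+1}](b^i)$, rather than your two-step route $[TV_{k+1}] \le [TV_k] \le V_{k+1}$; both work, and yours is arguably cleaner once you have the convexity/continuity of $[TV_k]$ rather than $[TV_{k+1}]$.
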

\begin{proof}
    For a given set $\Upsilon$ of belief-value pairs, we first show $[TV_{\mathit{ub}}^{\Upsilon}]$ is convex and continuous. From \cite[Theorem 6 and 7]{RY-GS-GN-DP-MK:23-2} 
    we have that
    $[T V_{\mathit{ub}}^{\Upsilon}](s_1, b_1)  = \mbox{$ \sup_{\alpha \in \Gamma^\Upsilon} $} \langle \alpha, (s_1,b _1) \rangle$,
    where 
    $\Gamma^\Upsilon \subseteq \mathbb{F}(S)$
    and $ L \leq \alpha(s) \leq U$ for all $s \in S$ and $\alpha \in \Gamma^\Upsilon$. By \cite[Proposition 4.9]{KH-BB-VK-CK:23}, it follows that $[TV_{\mathit{ub}}^{\Upsilon}](s_1, \cdot)$ is convex.
    For $b_1, b_1' \in \mathbb{P}(S_E)$ and $\alpha \in \Gamma^\Upsilon$, using \cite[Theorem 1]{RY-GS-GN-DP-MK:23-2}, we have $| \langle \alpha, (s_1, b_1) \rangle - \langle \alpha, (s_1, b_1') \rangle | \leq K_{\mathit{ub}} (b_1, b_1')$,
    where $K_{\mathit{ub}}$ measures belief difference, and hence
       $|  [TV_{\mathit{ub}}^{\Upsilon}](s_1, b_1) -  [TV_{\mathit{ub}}^{\Upsilon}](s_1, b_1') | \leq K_{\mathit{ub}}(b_1, b_1')$.
       
    Let $V_{\mathit{ub}}^{\Upsilon^t}$ and $I^t$ be the upper bound and the associated index set after the $t$-th point-based update,
    respectively. Similarly to \cite[Lemma 9.11]{KH-BB-VK-CK:23}, we can now prove the monotonicity of $[TV_{\mathit{ub}}^{\Upsilon^t}]$ by induction on $t \in \mathbb{N}$. 
    For $t{=}0$, since $\Upsilon^0 = \{ ((s_1^i, b_1^i), U) \in S_B {\times} \mathbb{R}  \mid i \in I^0 \}$ for some initial index set $I^0$. For any $(s_1, b_1) \in S_B$, using \eqref{eq:minimax-operator}, we have $[TV_{\mathit{ub}}^{\Upsilon^0}](s_1, b_1) \leq  \mbox{$\max_{s \in S \wedge a \in A}$} r(s, a) + \beta U  = (1 - \beta) U + \beta U = U = V_{\mathit{ub}}^{\Upsilon^0} (s_1, b_1)$.

    For the inductive step, we assume that $ [TV_{\mathit{ub}}^{\Upsilon^{t}}](s'_1, b'_1) \leq  V_{\mathit{ub}}^{\Upsilon^{t}}(s'_1, b'_1)$ for all $(s'_1, b'_1) \in S_B$. 
    Thus
    % we have 
    $y_i \ge V_{\mathit{ub}}^{\Upsilon^{t}}(s^i_1, b^i_1) \ge [TV_{\mathit{ub}}^{\Upsilon^{t}}](s^i_1, b^i_1)$ for $i \in I^t$.
    Let $(s_1, b_1) \in S_B$ be the belief for the $(t{+}1)$-th point-based update. By lines 8 and 9 of \cite[Algorithm 1]{RY-GS-GN-DP-MK:23-2}, we have $y^{\star} = [TV_{\mathit{ub}}^{\Upsilon^t}](s_1, b_1)$ and $\Upsilon^{t+1} = \Upsilon^t \cup \{ ((s_1, b_1), y^{\star}) \}$.
    Using \cite[Lemma 4]{RY-GS-GN-DP-MK:23-2}, we have $V_{\mathit{ub}}^{\Upsilon^t}(s'_1, b'_1) \ge V_{\mathit{ub}}^{\Upsilon^{t+1}} (s'_1, b'_1)$ for all $(s'_1, b'_1) \in S_B$, from which $[TV_{\mathit{ub}}^{\Upsilon^t}](s^i_1, b^i_1) \ge [TV_{\mathit{ub}}^{\Upsilon^{t+1}}](s^i_1, b^i_1)$ for any $i \in I^{t+1}$.
    Therefore we have that $y^i \ge  [TV_{\mathit{ub}}^{\Upsilon^{t}}](s^i_1, b^i_1) \ge [TV_{\mathit{ub}}^{\Upsilon^{t+1}}](s^i_1, b^i_1)$ for any $i \in I^{t+1}$.
    Now, for any $(s_1', b_1') \in S_B$, if $\smash{(\lambda_i^{\star} )_{i \in I^{t+1}_{s'_{\scale{.75}{1}}}}}$ is a solution for $V_{\mathit{ub}}^{\Upsilon^{t+1}}(s_1', b_1')$, then by construction:
    \begin{align}
       \lefteqn{\smash{V_{\mathit{ub}}^{\Upsilon^{t+1}}(s_1', b_1')}  = \smash{\mbox{$\sum\nolimits_{i \in I^{t+1}_{s'_{\scale{.75}{1}}}}$}\lambda_i^{\star} y_i + K_{\mathit{ub}}(b_1', \mbox{$\sum\nolimits_{i \in I^{t+1}_{s'_{\scale{.75}{1}}}}$} \lambda_i^{\star} b_1^i)}} & \nonumber \\
        & \ge \mbox{$\sum\nolimits_{i \in I^{t+1}_{s'_{\scale{.75}{1}}}}$}\lambda_i^{\star} [TV_{\mathit{ub}}^{\Upsilon^{t+1}}](s_1^i, b_1^i) + K_{\mathit{ub}}(b_1', \mbox{$\sum\nolimits_{i \in I^{t+1}_{s'_{\scale{.75}{1}}}}$} \lambda_i^{\star} b_1^i) & \mbox{by induction}  \nonumber \\
         & \ge \smash{[TV_{\mathit{ub}}^{\Upsilon^{t+1}}](s_1',  \mbox{$\sum\nolimits_{i \in I^{t+1}_{s'_{\scale{.75}{1}}}}$}\lambda_i^{\star}  b_1^i) + K_{\mathit{ub}}(b_1', \mbox{$\sum\nolimits_{i \in I^{t+1}_{s'_{\scale{.75}{1}}}}$} \lambda_i^{\star} b_1^i)} & \mbox{since $[TV_{\mathit{ub}}^{\Upsilon^{t+1}}]$ is convex} \nonumber \\
         & \ge [TV_{\mathit{ub}}^{\Upsilon^{t+1}}](s_1',  b_1') & \mbox{since $[TV_{\mathit{ub}}^{\Upsilon^{t+1}}]$ is $K_{\mathit{ub}}$-continuous} \nonumber 
    \end{align}
    and hence by induction $[TV_{\mathit{ub}}^{\Upsilon}]$ is monotone as required.
\end{proof}

\vspace{-5pt}

\begin{theorem}[$\varepsilon$-minimax strategy for $\agent_2$]\label{thom:strategy-agent-2}The strategy $\sigma_2^{\mathit{ub}}$ in \algoref{alg:strategy-agent-2} is an $\varepsilon$-minimax strategy  for $\agent_2$ at  $(s_{1}^{\mathit{init}}, b_{1}^{\mathit{init}})$, i.e., $\mathbb{E}_{(s_{1}^{\mathit{init}}, b_{1}^{\mathit{init}})}^{\sigma_1,\sigma_2^{\mathit{ub}}}[Y] \leq V^{\star}(s_{1}^{\mathit{init}}, b_{1}^{\mathit{init}}) + \varepsilon$ for all $\sigma_1 \in \Sigma_1$.
\end{theorem}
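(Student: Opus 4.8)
The plan is to mirror the proof of \thomref{thom:strategy-agent-1}, with the \emph{upper} bound $V_{\mathit{ub}}^{\Upsilon}$, the minimax operator $T$, and its monotonicity $[TV_{\mathit{ub}}^{\Upsilon}] \le V_{\mathit{ub}}^{\Upsilon}$ (\lemaref{lema:strategy-agent-2}) playing the roles of $V_{\mathit{lb}}^{\Gamma}$, the LP~\eqref{eq:lower-bound-LP-bounded-by-alpha}, and the $\alpha$-tracking. First, since NS-HSVI returns $V_{\mathit{lb}}^{\Gamma}, V_{\mathit{ub}}^{\Upsilon}$ with $V_{\mathit{ub}}^{\Upsilon}(b^{\mathit{init}}) - V_{\mathit{lb}}^{\Gamma}(b^{\mathit{init}}) \le \varepsilon$ and $V_{\mathit{lb}}^{\Gamma} \le V^{\star}$, it suffices to prove $\mathbb{E}_{(s_{1}^{\mathit{init}}, b_{1}^{\mathit{init}})}^{\sigma_1,\sigma_2^{\mathit{ub}}}[Y] \le V_{\mathit{ub}}^{\Upsilon}(s_{1}^{\mathit{init}}, b_{1}^{\mathit{init}})$ for all $\sigma_1 \in \Sigma_1$. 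Let $L, U$ be the trivial bounds on $V^{\star}$ from the proof of \lemaref{lema:existence-proper-belief}, and for $(s_1,b_1) \in S_B$ and $t \in \mathbb{N}$ let $\sigma_2^{(s_1,b_1),t}$ be the strategy that runs $\mathit{Resolve}_2$ with initial inferred belief $(s_1,b_1)$ for the first $t$ stages and then plays arbitrarily, so $\sigma_2^{\mathit{ub}} = \lim_{t\to\infty} \sigma_2^{(s_{1}^{\mathit{init}},b_{1}^{\mathit{init}}),t}$.

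The core is an induction on $t$ establishing that, for all $(s_1,b_1) \in S_B$ and $\sigma_1 \in \Sigma_1$,
\begin{equation*}
\mathbb{E}_{(s_1,b_1)}^{\sigma_1,\sigma_2^{(s_1,b_1),t}}[Y] \;\le\; V_{\mathit{ub}}^{\Upsilon}(s_1,b_1) + \beta^{t}(U - L) .
\end{equation*}
The base case $t=0$ follows from $\mathbb{E}[Y] \le U$ and $V_{\mathit{ub}}^{\Upsilon}(s_1,b_1) \ge V^{\star}(s_1,b_1) \ge L$. For the inductive step, condition on the first stage: by line~3 of \algoref{alg:strategy-agent-2}, $\agent_2$ plays the minimiser $u_2^{\mathit{ub}}$ of $[TV_{\mathit{ub}}^{\Upsilon}](s_1,b_1)$, while $\agent_1$ plays its stage strategy $u_1 \in \mathbb{P}(A_1)$ given by $\sigma_1$; after observing $(a_1,s_1')$, $\agent_2$'s inferred belief becomes $b_1' = b_1^{s_1,a_1,u_2^{\mathit{ub}},s_1'}$ and it continues with $\sigma_2^{(s_1',b_1'),t}$ against the continuation $\sigma_1'$ of $\sigma_1$. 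Expanding $Y$ one step and regrouping the sum over next states by $(a_1,s_1')$ rewrites the continuation term as $\beta \sum_{(a_1,s_1')} P(a_1,s_1' \mid (s_1,b_1),u_1,u_2^{\mathit{ub}})\, \mathbb{E}_{(s_1',b_1')}^{\sigma_1',\sigma_2^{(s_1',b_1'),t}}[Y]$; applying the induction hypothesis at $(s_1',b_1')$ and then identifying the resulting one-step expression with the objective of $T$ evaluated at $(u_1,u_2^{\mathit{ub}})$, which is $\le [TV_{\mathit{ub}}^{\Upsilon}](s_1,b_1)$ since $u_2^{\mathit{ub}}$ attains the outer minimum and $\max_{u_1}\min_{u_2} = \min_{u_2}\max_{u_1}$ for $T$ (from \cite{RY-GS-GN-DP-MK:23-2}), gives $\mathbb{E}_{(s_1,b_1)}^{\sigma_1,\sigma_2^{(s_1,b_1),t+1}}[Y] \le [TV_{\mathit{ub}}^{\Upsilon}](s_1,b_1) + \beta^{t+1}(U-L)$; \lemaref{lema:strategy-agent-2} then replaces $[TV_{\mathit{ub}}^{\Upsilon}](s_1,b_1)$ by $V_{\mathit{ub}}^{\Upsilon}(s_1,b_1)$, closing the induction. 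Letting $t \to \infty$ at $(s_{1}^{\mathit{init}},b_{1}^{\mathit{init}})$, with $\beta^{t}(U-L) \to 0$ and $\mathbb{E}^{\sigma_1,\sigma_2^{(s_{1}^{\mathit{init}},b_{1}^{\mathit{init}}),t}}[Y] \to \mathbb{E}^{\sigma_1,\sigma_2^{\mathit{ub}}}[Y]$ (the two strategies agree on the first $t$ stages), yields $\mathbb{E}_{(s_{1}^{\mathit{init}},b_{1}^{\mathit{init}})}^{\sigma_1,\sigma_2^{\mathit{ub}}}[Y] \le V_{\mathit{ub}}^{\Upsilon}(s_{1}^{\mathit{init}},b_{1}^{\mathit{init}})$, hence the theorem by the reduction above.

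The step I expect to be the main obstacle is the regrouping in the inductive step: rewriting the expectation of the continuation value over the \emph{true} next environment state as $\mathbb{E}_{(s_1',b_1')}[\,\cdot\,]$ with $b_1' = b_1^{s_1,a_1,u_2^{\mathit{ub}},s_1'}$. This requires that $\agent_2$'s inferred belief is exactly the posterior of the environment state $s_E$ conditioned on the history observable to $\agent_1$, which is where the one-sided structure is essential: $\agent_1$ never observes $s_E$ and the only $s_E$-dependent part of $\agent_1$'s observation is the percept, a component of $s_1$ hence also seen by the fully-informed $\agent_2$, so $\agent_1$'s stage strategies are functions of quantities observed by $\agent_2$ and the action likelihood $u_1(a_1)$ cancels in the Bayes ratio, making $b_1^{s_1,a_1,u_2^{\mathit{ub}},s_1'}$ coincide with the one-step posterior given $(a_1,s_1')$ and propagate inductively from the common belief $b_{1}^{\mathit{init}}$. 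The same observation makes $\agent_2$'s inferred belief automatically proper (the sampled $s_E$ always lies in its support), so, unlike in \lemaref{lema:existence-proper-belief}, no uniform-strategy assumption is needed here; and the auxiliary facts (existence of $u_2^{\mathit{ub}}$ and $\min_{u_2}\max_{u_1} = [TV_{\mathit{ub}}^{\Upsilon}]$) follow from the properties of $T$ and the LP characterisation of $[TV_{\mathit{ub}}^{\Upsilon}]$ in \cite{RY-GS-GN-DP-MK:23-2} already used in the proof of \lemaref{lema:strategy-agent-2}.
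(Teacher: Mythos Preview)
Your proposal is correct and follows essentially the same approach as the paper: an induction on $t$ establishing $\mathbb{E}_{b}^{\sigma_1,\sigma_2^{b,t}}[Y] \le V_{\mathit{ub}}^{\Upsilon}(b) + \beta^{t}(U-L)$, with the inductive step expanding one stage, applying the hypothesis at the updated inferred belief, bounding by $[TV_{\mathit{ub}}^{\Upsilon}](b)$ via the minimax property of $u_2^{\mathit{ub}}$, and then invoking \lemaref{lema:strategy-agent-2}. Your extra discussion of why the inferred belief coincides with the true posterior (the ``regrouping'' step) is a point the paper leaves implicit, so you have in fact filled in a detail the paper glosses over.
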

\begin{proof}
Consider $b {=} (s_1, b_1) \in S_B$. We assume that $\agent_2$ follows $\mathit{Resolve}_2$ in \algoref{alg:strategy-agent-2} for the first $t$ stages and then follows the uniform strategy,
and denote this strategy by $\smash{\sigma_2^{b,t}}$.
We prove by induction on $t \in \mathbb{N}$ that the expected value of $Y$ from $b$ under $\smash{\sigma_2^{b,t}}$ has the following upper bound:
 \begin{equation}\label{eq:expected-value-strategy-agent-2-ub}
        \smash{\mathbb{E}_{b}^{\sigma_1, \sigma_2^{b,t}}[Y]  \leq V_{\mathit{ub}}^{\Upsilon}(b) + \beta^{t} ( U - L)} \qquad \textup{ for all } \sigma_1 \in \Sigma_1.
    \end{equation}
   For $t=0$, the strategy $\smash{\sigma_2^{b,0}}$ implies that $\agent_2$ adopts the uniform strategy, %from the beginning, i.e., $(s_1,b_1)$, 
   and therefore \eqref{eq:expected-value-strategy-agent-2-ub} directly follows as $U$ and $L$ are  lower and upper bounds.
    
    For the inductive step, we assume \eqref{eq:expected-value-strategy-agent-2-ub} holds for the first $t$ stages. % and any $(\bar{s}_1, \bar{b}_1) \in S_B$. 
    The strategy $\sigma_2^{b,t+1}$ implies that at $b=(s_1, b_1)$, $\agent_2$ takes $u_2^{\mathit{ub}}$ (line 4) and then if $a_1$ is taken and $s_1'$ is observed, follows the strategy $\smash{\sigma_2^{b',t}}$, where $b' =(s_1',b_1')$ and $b_1' = b_1^{s_1, a_1, u_2^{\mathit{ub}}, s_1'}$. Letting $u_1 \in \mathbb{P}(A_1)$ be $\agent_1\!$'s stage strategy at $b$ given by any $\sigma_1$, the left-hand side of \eqref{eq:expected-value-strategy-agent-2-ub} by replacing $t$ with $t+1$ equals:
    \begin{align*}
           \lefteqn{\mathbb{E}_{b, u_1,u_2^{\mathit{ub}}} [r(s,a)] + \beta \mbox{$\sum_{(a_1,s_1') \in A_1 \times S_1}$} P(a_1, s_1' \mid b, u_1, u_2^{\mathit{ub}} ) \mathbb{E}_{b'}^{\sigma_1, \sigma_2^{b',t}} [Y]} \nonumber \\
        & \leq  \mathbb{E}_{b, u_1,u_2^{\mathit{ub}}} [r(s,a)] + \beta \mbox{$\sum_{(a_1,s_1')\in A_1 \times S_1}$} P(a_1, s_1' \mid b, u_1, u_2^{\mathit{ub}} ) (V_{\mathit{ub}}^{\Upsilon}(b') + \beta^{t} ( U - L)) &\!\!\!\!\!\!\!\!\!\!\!\!\!\!  \!\!\!\!\!\!\!\!\! \!\!\!\!\!\!\!\!\!\!\!\!\!\!\!\!\!\!\!\!\!\!\!\!\!\!  \mbox{by induction}  \nonumber \\
        &  \leq \mathbb{E}_{b, u_1^{\mathit{ub}},u_2^{\mathit{ub}}} [r(s,a)] + \beta \mbox{$\sum_{(a_1,s_1') \in A_1 \times S_1}$}  P(a_1, s_1' \mid b, u_1^{\mathit{ub}}, u_2^{\mathit{ub}} ) V_{\mathit{ub}}^{\Upsilon}(b') + \beta^{t+1} ( U - L) \nonumber \\
        &  &\!\!\!\!\!\!\!\!\!\!\!\!\!\!  \!\!\!\!\!\!\!\!\!\!\!\!\!\! \!\! \!\!\!\!\!\!\!\!\!\!\!\!\!\!\!\!\!\!\!\!\!\!\!\!\!\!\!\!\!\!\!\!\!\!\!\!\!\!\!\!\!\!\!\!\!\!\!\!\!\!\!\!\!\!\!\!\!\!\!\!\!\!\! \mbox{rearranging and since $u_1^{\mathit{ub}}$ is a minimax strategy}   \nonumber \\
        &  = [TV_{\mathit{ub}}^{\Upsilon}](b) + \beta^{t+1}(U - L) & \!\!\!\!\!\!\!\!\!\!\!\!\!\! \!\!\!\!\!\!\!\!\! \!\!\!\!\!\!\!\!\!\!\!\!\!\!\!\!\!\!\!\!\!\!\!\!\!\!\!\!\!\!\!\!\!\!  \mbox{by definition of $[TV_{\mathit{ub}}^{\Upsilon}]$} \nonumber \\
        & \leq V_{\mathit{ub}}^{\Upsilon} (b) + \beta^{t+1}(U - L)  & \!\!\!\!\!\!\!\!\!\!\!\!\!\!  \!\!\!\!\!\!\!\!\!\!\!\!\!\!\!\!\!\!\!\!\!\!\!\!\!\!\!\!\!\!\!\!\!\!\!\!\!\!\!\!\!\!\! \textup{by \lemaref{lema:strategy-agent-2}} 
    \end{align*}
    as required. Now, letting $\sigma_2^{\mathit{ub}} = \lim_{t \rightarrow \infty} \smash{\sigma_2^{b^{\mathit{init}}, t}}$, from \eqref{eq:expected-value-strategy-agent-2-ub} we have:
    \begin{align}
\smash{\mathbb{E}_{b^{\mathit{init}}}^{\sigma_1,\sigma_2^{\mathit{ub}}}[Y]  \leq V_{\mathit{ub}}^{\Upsilon} (b^{\mathit{init}}) \leq V^{\star}(b^{\mathit{init}}) + \varepsilon} \nonumber 
    \end{align}
    where the last inequality follows from the fact that $V_{\mathit{ub}}^{\Upsilon}$ is returned by one-sided NS-HSVI.
\end{proof}

\vspace{-12pt}
\begin{cor}[$\varepsilon$-minimax strategy profile]
    The profile $(\sigma_1^{\mathit{lb}}, \sigma_2^{\mathit{ub}})$ is an $\varepsilon$-minimax strategy profile.
\end{cor}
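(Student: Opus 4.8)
The plan is to combine the two preceding theorems, being slightly careful: naively chaining $\thomref{thom:strategy-agent-1}$ and $\thomref{thom:strategy-agent-2}$ through $V^{\star}(b^{\mathit{init}})$ would only yield a $2\varepsilon$ guarantee, so instead I would use the sharper intermediate bounds that already appear in their proofs and close the gap with the NS-HSVI tightness property $V_{\mathit{ub}}^{\Upsilon}(b^{\mathit{init}}) - V_{\mathit{lb}}^{\Gamma}(b^{\mathit{init}}) \leq \varepsilon$. Concretely, the proof of $\thomref{thom:strategy-agent-1}$ actually establishes $\mathbb{E}_{b^{\mathit{init}}}^{\sigma_1^{\mathit{lb}},\sigma_2}[Y] \ge V_{\mathit{lb}}^{\Gamma}(b^{\mathit{init}})$ for every $\sigma_2 \in \Sigma_2$, and the proof of $\thomref{thom:strategy-agent-2}$ establishes $\mathbb{E}_{b^{\mathit{init}}}^{\sigma_1,\sigma_2^{\mathit{ub}}}[Y] \le V_{\mathit{ub}}^{\Upsilon}(b^{\mathit{init}})$ for every $\sigma_1 \in \Sigma_1$; I would record these two facts at the outset.

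Next I would instantiate each of these with the companion strategy to sandwich the value of the profile: taking $\sigma_2 = \sigma_2^{\mathit{ub}}$ in the first and $\sigma_1 = \sigma_1^{\mathit{lb}}$ in the second gives $V_{\mathit{lb}}^{\Gamma}(b^{\mathit{init}}) \le \mathbb{E}_{b^{\mathit{init}}}^{\sigma_1^{\mathit{lb}},\sigma_2^{\mathit{ub}}}[Y] \le V_{\mathit{ub}}^{\Upsilon}(b^{\mathit{init}})$. To check the first defining inequality of an $\varepsilon$-minimax profile at $b^{\mathit{init}}$, namely $\mathbb{E}_{b^{\mathit{init}}}^{\sigma_1^{\mathit{lb}},\sigma_2^{\mathit{ub}}}[Y] \le \mathbb{E}_{b^{\mathit{init}}}^{\sigma_1^{\mathit{lb}},\sigma_2}[Y] + \varepsilon$ for all $\sigma_2 \in \Sigma_2$, I would chain $\mathbb{E}_{b^{\mathit{init}}}^{\sigma_1^{\mathit{lb}},\sigma_2^{\mathit{ub}}}[Y] \le V_{\mathit{ub}}^{\Upsilon}(b^{\mathit{init}}) \le V_{\mathit{lb}}^{\Gamma}(b^{\mathit{init}}) + \varepsilon \le \mathbb{E}_{b^{\mathit{init}}}^{\sigma_1^{\mathit{lb}},\sigma_2}[Y] + \varepsilon$, where the middle step is the NS-HSVI gap bound. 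Symmetrically, for the second inequality $\mathbb{E}_{b^{\mathit{init}}}^{\sigma_1^{\mathit{lb}},\sigma_2^{\mathit{ub}}}[Y] \ge \mathbb{E}_{b^{\mathit{init}}}^{\sigma_1,\sigma_2^{\mathit{ub}}}[Y] - \varepsilon$ for all $\sigma_1 \in \Sigma_1$, I would chain $\mathbb{E}_{b^{\mathit{init}}}^{\sigma_1^{\mathit{lb}},\sigma_2^{\mathit{ub}}}[Y] \ge V_{\mathit{lb}}^{\Gamma}(b^{\mathit{init}}) \ge V_{\mathit{ub}}^{\Upsilon}(b^{\mathit{init}}) - \varepsilon \ge \mathbb{E}_{b^{\mathit{init}}}^{\sigma_1,\sigma_2^{\mathit{ub}}}[Y] - \varepsilon$. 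Both defining conditions then hold, which is exactly the claim.

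There is no real obstacle here, since everything reduces to arithmetic on the four inequalities above; the one point I would flag explicitly is that the definition of $\varepsilon$-minimax strategy profile quantifies over all $b \in S_B$, whereas $\thomref{thom:strategy-agent-1}$, $\thomref{thom:strategy-agent-2}$, and hence this corollary only concern the initial belief $b^{\mathit{init}}$ that is common knowledge to both agents, so the statement should be read as ``$\varepsilon$-minimax at $b^{\mathit{init}}$''. I would also remark that it is precisely the HSVI-returned gap $V_{\mathit{ub}}^{\Upsilon}(b^{\mathit{init}}) - V_{\mathit{lb}}^{\Gamma}(b^{\mathit{init}}) \leq \varepsilon$ that prevents the error from doubling to $2\varepsilon$.
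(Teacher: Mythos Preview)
Your argument is correct. The paper states the corollary without proof, treating it as immediate from \thomref{thom:strategy-agent-1} and \thomref{thom:strategy-agent-2}; your write-up supplies exactly the details one would expect, and your observation that one must go back to the sharper intermediate bounds $\mathbb{E}_{b^{\mathit{init}}}^{\sigma_1^{\mathit{lb}},\sigma_2}[Y]\ge V_{\mathit{lb}}^{\Gamma}(b^{\mathit{init}})$ and $\mathbb{E}_{b^{\mathit{init}}}^{\sigma_1,\sigma_2^{\mathit{ub}}}[Y]\le V_{\mathit{ub}}^{\Upsilon}(b^{\mathit{init}})$ together with the HSVI gap $V_{\mathit{ub}}^{\Upsilon}(b^{\mathit{init}})-V_{\mathit{lb}}^{\Gamma}(b^{\mathit{init}})\le\varepsilon$ (rather than chaining through $V^{\star}$, which would only give $2\varepsilon$) is a genuine and useful refinement over a purely black-box reading of the two theorems. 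Your caveat that the conclusion is at $b^{\mathit{init}}$ rather than for all $b\in S_B$ is also well taken and matches how the two theorems are stated.
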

\section{Experiments}

\begin{figure}[t]
    \centering
    \includegraphics[width=0.2\textwidth]{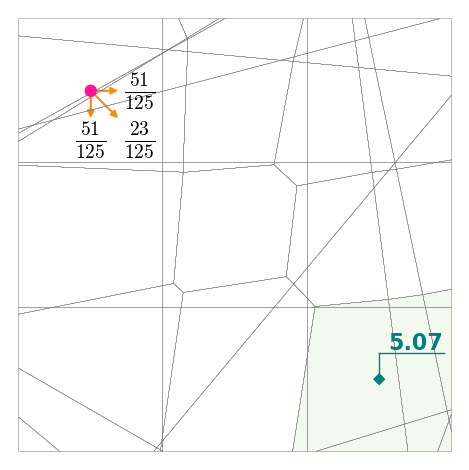}
    \hspace{-0.2cm}
    \includegraphics[width=0.2\textwidth]{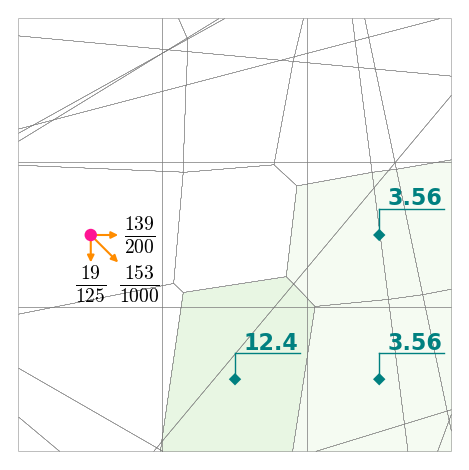}
    \hspace{-0.2cm}
    \includegraphics[width=0.2\textwidth]{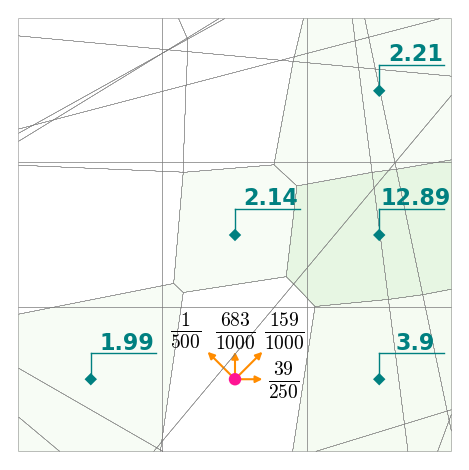}
    \hspace{-0.2cm}
    \includegraphics[width=0.2\textwidth]{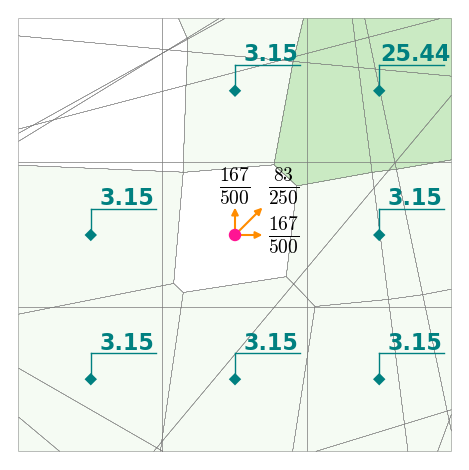}
    \hspace{-0.2cm}
    \includegraphics[width=0.2\textwidth]{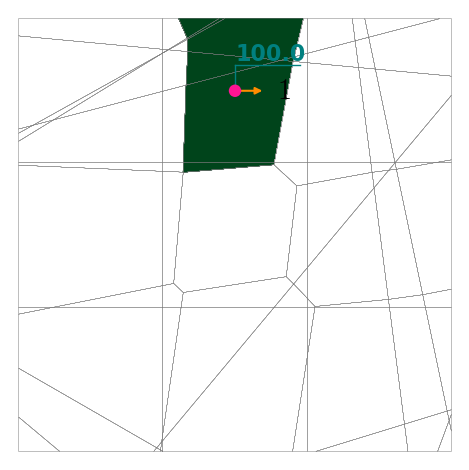}

    \includegraphics[width=0.2\textwidth]{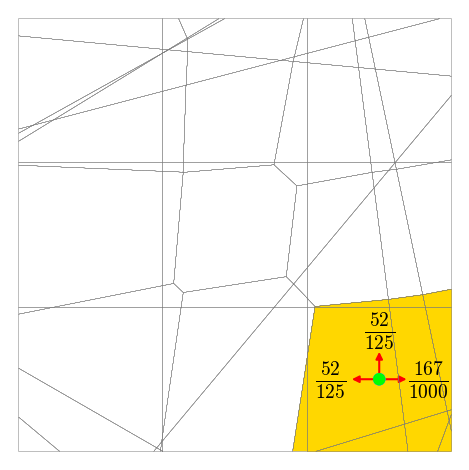}
    \hspace{-0.2cm}
    \includegraphics[width=0.2\textwidth]{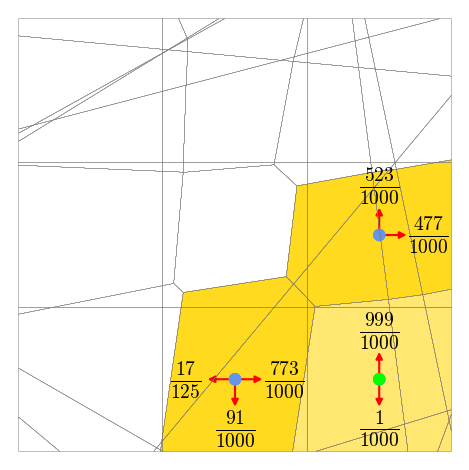}
    \hspace{-0.2cm}
    \includegraphics[width=0.2\textwidth]{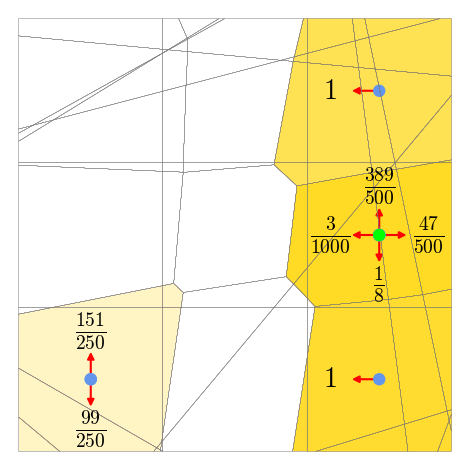}
    \hspace{-0.2cm}
    \includegraphics[width=0.2\textwidth]{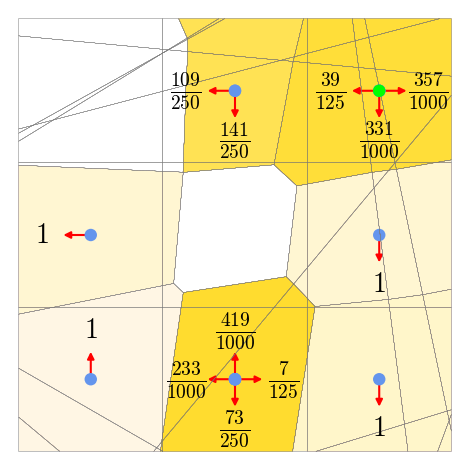}
    \hspace{-0.2cm}
    \includegraphics[width=0.2\textwidth]{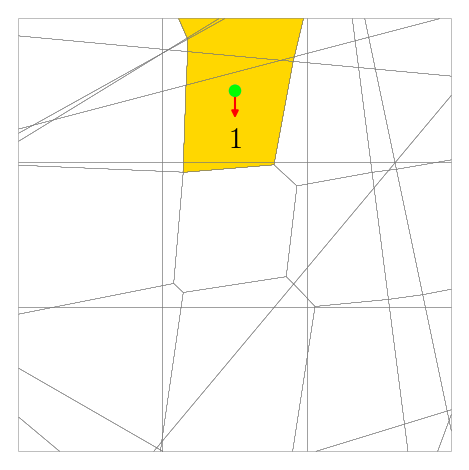}
    \vspace*{-0.2cm}
    \caption{Snippets of a synthesised strategy for the pursuer and evader
    % Snippets of a synthesised capturing strategy for the pursuer 
    (from left to right).}
    \label{fig:pursuit_evasion}
    \vspace*{-0.4cm}
\end{figure}

\noindent
We 
evaluated our method on a variant of a \emph{pursuit-evasion} game \citep{KH-BB-VK-CK:23}, inspired by mobile robotics \citep{THC-GAH-VI:11,VN08}; see the appendix of \cite{RY-GS-GN-DP-MK:23-2} for more detail. The game involves a \emph{pursuer}, whose aim is to capture an \emph{evader}.  
The pursuer is equipped with a ReLU NN classifier, which takes the location (coordinates) of the pursuer as
input and outputs one of the 9 abstract grid cells, each consisting of multiple polytopes, with the initial decomposition obtained by computing the preimage of the NN \citep{KM-FF:20}. The pursuer therefore observes
which cell it is in, but not its exact location, and knows neither the exact location nor cell of
the evader. 
The evader is fully informed
and knows the exact locations of both agents.
The evader is captured when both agents are in the same cell.
We set a discount of $0.7$, reward 100 for capture and timeout of 2$h$. The (offline) lower and upper bounds for the value of the initial belief are $5.0699$ and $6.0665$, respectively.
% [5.069905,6.066499]. 
We synthesise strategies, which demonstrate that
the pursuer can eventually capture the evader with positive probability.
Fig.~\ref{fig:pursuit_evasion} shows stage strategies and lower bounds of states in the belief of the pursuer (top), and evader's strategies and inferred beliefs (bottom), at different stages.
Lower bounds are coloured green and inferred beliefs yellow. The agent positions are highlighted (pink dot for pursuer and light green for evader). 
The belief of the pursuer and inferred-belief of the evader do not always coincide, 
e.g., in the third column, the state with bound $2.14$ is in the pursuer's belief, but not the inferred belief. 
 We observe that the pursuer's strategy selects the moves according to the magnitude of the bound, 
e.g., in the fourth column, the pursuer moves up or right, since the top right evader position has the highest %largest 
bound.

\vspace*{-0.4cm}
\section{Conclusions}

\noindent
We have developed an efficient online method to synthesise strategies for a variant of one-sided continuous-state POSGs with discrete observations and validated it on a pursuit-evasion game, in which the partially-informed agent uses a neural network for perception. We have shown that combining continual resolving, inferred beliefs and HSVI bounds computed offline can generate an $\varepsilon$-minimax strategy profile online.
% \marta{Some insight?} \rui{insights could be showing that combined with continual resolving and inferred beliefs, HSVI bounds computed offline are enough to generate an $\varepsilon$-minimax strategy profile online} 
% As future work, 
For future work, we will consider aggressive assumed stage strategies for the fully-informed agent,
% which offer worse lower bound guarantees but better efficiency for long-run games, 
since uniform strategies may lead to a large number of states in the belief and consequently large LPs to solve. %\marta{TBC} \rui{Future work could consider aggressive assumed stage strategies for the fully informed agent in the NS-HSVI continual resolving with worse lower bound guarantee but better efficiency for long-run games, instead of the uniform stage strategy which may lead to a large number of points in the belief after many stages and thus large LPs to solve at each stage afterwards. }

% \vfill
\startpara{Acknowledgements}
This project was funded by the ERC under the European
Union’s Horizon 2020 research and innovation programme
(FUN2MODEL, grant agreement No.834115).
\bibliography{references}

\end{document}